\title[Binary sequences with a low correlation]{Binary sequences with a low correlation via cyclotomic function fields with odd characteristics}
\author{Lingfei Jin}\address{Shanghai Key Laboratory of Intelligent Information Processing, School of Computer Science, Fudan University, Shanghai 200433, China.} \email{lfjin@fudan.edu.cn}
\author{Liming Ma}\address{School of Mathematical Sciences, University of Science and Technology of China, Hefei 230026, China}\email{lmma20@ustc.edu.cn}
\author{Chaoping Xing} \address{School of Electronic Information and Electric Engineering, Shanghai Jiao Tong University, China 200240}\email{xingcp@sjtu.edu.cn}
\newtheorem{lemma}{Lemma}[section]
\newtheorem{theorem}[lemma]{Theorem}
\newtheorem{prop}[lemma]{Proposition}
\newtheorem{ex}[lemma]{Example}
\theoremstyle{remark}
\renewcommand{\epsilon}{\varepsilon}
\renewcommand{\le}{\leqslant}
\renewcommand{\ge}{\geqslant}
\def\PGL{{\rm PGL}}
\def\Gal{{\rm Gal}}
\newcommand{\vnote}[1]{}
\def\PP{\mathbb{P}}
\def\F{\mathbb{F}}
\def\Z{\mathbb{Z}}
\def \mL {\mathcal{L}}
\def \mS {\mathcal{S}}
\def \Xi {{X^{[i]}}}
\newcommand{\Ga}{\alpha}
\newcommand{\Gb}{\beta}
\newcommand{\Gl}{\lambda}    
\newcommand{\Gs}{\sigma}
\newcommand{\s}{\sigma}
\def\PGL{{\rm PGL}}
\def \bs {{\bf s}}
\def \bu {{\bf u}}
\def \bv {{\bf v}}
\def\mS{{\mathcal S}}
\def\Aut {{\rm Aut }}
\def\Gal{{\rm Gal}}
\begin{document}
\maketitle

\begin{abstract}
Sequences with a low correlation have very important applications in communications, cryptography, and compressed sensing.
In the literature, many efforts have been made to construct good sequences with various lengths where binary sequences attracts great attention. As a result, various constructions of good binary sequences have been proposed.
However, most of the known constructions made use of the multiplicative cyclic group structure of finite field $\F_{p^n}$ for a prime $p$ and a positive integer $n$. In fact, all $p^n+1$ rational places including the place at infinity of the rational function field over $\F_{p^n}$ form a cyclic structure under an automorphism of order $p^n+1$. In this paper, we make use of this cyclic structure to provide an explicit construction of binary sequences with a low correlation of length $p^n+1$ via cyclotomic function fields over $\F_{p^n}$ for any odd prime $p$. Each family of binary sequences has size $p^n-2$ and its correlation is upper bounded by $4+\lfloor 2\cdot p^{n/2}\rfloor$. To the best of our knowledge, this is the first construction of binary sequences with a low correlation of length $p^n+1$ for odd prime $p$.
Moreover, our sequences can be constructed explicitly and have competitive parameters.
\end{abstract}

\section{Introduction}
Sequences with a low correlation have important applications in code-division multiple access (CDMA), spread spectrum systems, and broadband satellite communications \cite{AP07, CHPW15, K10, LHL,Lin,16ZXE}.
For these applications, sequences with good parameters, such as a low correlation (including autocorrelation and cross-correlation) and a large family size, are highly desired. In the literature, many families of binary sequences with good properties have been constructed from various methods. The Gold sequences is the best known binary sequences with optimal correlations  which can be obtained by EXOR-ing two $m$-sequences \cite{m} of the same length with each other \cite{Span}.
Kasami sequences can be constructed using m-sequences and their decimations \cite{Kas}.
Besides, there are some known families of sequences of length  $2^n-1$ with good correlation properties, such as  bent function sequences \cite{Bent}, No sequences \cite{No}, Trace Norm sequences \cite{Klap}.
In 2011, Zhou and Tang generalized the modified Gold sequences and obtained binary sequences with length $2^n-1$ for $n=2m+1$, size $2^{\ell n}+\cdots+2^n-1$, and correlation $2^{m+\ell}+1$ for each $1\le \ell \le m$ \cite{ZT}.
In addition to the above mentioned sequences of length $2^n-1$, there are also some work devoted to the constructions of good binary sequences with length of forms such as $2(2^n-1)$ and $(2^n-1)^2$.
In \cite{Uda}, Udaya and Siddiqi obtained a family of $2^{n-1}$ binary sequences, length $2(2^n-1)$  satisfying the Welch bound for odd $n$.
This family was extended to a larger one with  $2^n$ sequences and the same correlation in \cite{Tang3}. In \cite{Tang2}, the authors presented two optimal families of sequences of length $2^n-1$ and $2(2^n-1)$ for odd integer $n$.
In 2002, Gong presented a construction of binary sequences with size $2^n-1$, length $(2^n-1)^2$, and correlation $3+2(2^n-1)$ \cite{gong}. 
However, most of the known constructions via finite fields made use of the multiplicative cyclic group of $\F_{2^n}$. It was often overlooked that all $2^n+1$ rational places including the place at infinity of the rational function field over $\F_{2^n}$ form a cyclic structure under an automorphism of order $2^n+1$. Recently, by using this cyclic structure,
an explicit construction of binary sequences of length $2^n+1$, size $2^n-1$ and correlation upper bounded by $\lfloor 2^{(n+2)/2}\rfloor$ via cyclotomic function fields over the finite field $\F_{2^n}$ was given \cite{JMX22}.
Note that all the above constructions are based on finite fields of even characteristics.

There are also some constructions of binary sequences which are based on finite fields of odd characteristics \cite{Lempel,Legendre,Pat,Rushanan,Su, Jin21}.
Paterson proposed a family of pseudorandom binary sequences based on Hadamard difference sets and maximum distance separable (MDS) codes for length $p^2$ where $p\equiv 3\ (\text{mod } 4)$ is a prime \cite{Pat}. 
In 2006, a family of binary sequences of length $p$ which is an odd prime, family size $(p-1)/2$, and correlation bounded by $5+2\sqrt{p}$ were given in \cite{Rushanan}, now known as Weil sequences.
The idea of constructing Weil sequences is to derive sequences from single quadratic residue based on Legendre sequences using a shift-and-add construction.
In 2010, by combining the $p$-periodic Legendre sequence and the $(q-1)$-periodic Sidelnikov sequence, Su {\it et al.} introduced a new sequence of length $p(q-1)$, called the Lengdre-Sidelnikov sequence \cite{Su}.
Recently, Jin {\it et al.} provided several new families of sequences with low correlations of length $q-1$ for any prime power $q$ by using the cyclic multiplication group $\F_q^*$ and sequences of  length $p$ for any odd prime $p$  from the cyclic additive group $\F_p$ in \cite{Jin21}. 
 The idea of constructing  binary sequences is  using  multiplicative quadratic character over finite fields of odd characteristics. 

Though various constructions of binary sequences with good parameters have been proposed, there are still limited choices for sequence lengths. For applications in different scenarios,  adding or deleting sequence values, usually destroys the good correlation properties.
Thus, there is still a need to construct binary sequences having low correlations with more flexible choices
of parameters. However, it appears to be an open and challenging problem to find good binary sequences with new parameters.

\subsection{Our main result and techniques}
In this paper, we provide an explicit construction of binary sequences with a low correlation of length $p^n+1$ via cyclotomic function fields over the finite field $\F_{p^n}$ for any odd prime $p$. The correlation of this family of binary sequences is upper bounded by $4+\lfloor 2\cdot p^{n/2}\rfloor$. To our best knowledge, this is the first construction of binary sequences with a low correlation of length type $p^n+1$. In Table I, we list many families of binary sequences with low correlations for comparison. 
 It turns out that our binary sequences have competitive parameters. 



In order to understand the main idea of this article better, we give a high-level description of our techniques.
In the language of function fields, we denote $P_\Ga$ by the zero of $x-\Ga$ for any $\Ga\in\F_q$ in the rational function field $\F_q(x)$.
We employ all rational places include the infinity place of the rational function field to obtain binary sequences with length $q+1$ as given in \cite{JMX20,JMX21,JMX22}.
The automorphism group of the rational function field $\F_q(u)$ over $\F_q$ is isomorphic to the projective general linear group $\PGL_2(\F_q)$ and  there is an automorphism $\Gs$ with order $q+1$ such that $\{\Gs^j(P_0)\}_{j=0}^q$ consists of all $q+1$ rational places of $\F_q(u)$ \cite{HKT08,JMX20}.
Our binary sequences are constructed via the quadratic multiplicative character of $\F_q$ and evaluations of carefully chosen rational functions  $z_1,z_2,\cdots,z_S\in E=\F_q(u)$ on such cyclically ordered rational places $\{\Gs^j(P_0)\}_{j=0}^q$.
The correlation of this family of binary sequences is converted to determining the upper bound of the number of rational places of Kummer extension $E_{i}=E(y)$ with $y^2=z_i\cdot \s^{-t}(z_i)$ for $1\le i \le S$ and $1\le t\le q$ and Kummer extension $E_{i,j}=E(y)$ with $y^2=z_i\cdot \s^{-t}(z_j)$ for $1\le i\neq j\le S$ and $0\le t\le q$.

In order to obtain binary sequences with a low correlation, the number of rational places of $E_i$ or $E_{i,j}$ can not be large. 
Hence, $z_i\cdot \Gs^{-t}(z_j)$ can not be elements in $\F_q\cdot E^2$ except the case $t=0$ and $i=j$, since there are much more rational places in the constant field extensions. To overcome this problem, we introduce an equivalence relation and choose at most one element in each equivalence class. Moreover, the genus of $E_i$ or $E_{i,j}$ can not be large, we need to choose functions $z_i$ such that $z_i$ and $\Gs^{-t}(z_j)$ have the same pole. In particular, $z_i$ can be chosen from a Riemann-Roch space associated to some place which is invariant under the automorphism $\Gs$. 

In particular, we employ the cyclotomic function fields over $\F_q(x)$ with modulus $p(x)$ which is a primitive quadratic irreducible polynomial to construct such an explicit family of binary sequences with a low correlation. The Galois group of this cyclotomic function field over $\F_q(x)$ is a cyclic group of order $q^2-1$. There is a unique subgroup of order $q-1$, its fixed subfield is the function field $E$ and the Galois group of $E$ over $\F_q(x)$ is a cyclic group of $q+1$ which is generated by $\sigma$. 
Let $Q$ be the unique totally ramified place of $E$ lying over $p(x)$ and $\mL(Q)$ be the Riemann-Roch space of $Q$. Hence, $\sigma(Q)=Q$ and rational functions $z_1,z_2,\cdots,z_S$ can be chosen as representative elements of some equivalence classes of an equivalence relation on $\mL(Q)\setminus \{0\}$. This family of binary sequences with a low correlation via cyclotomic function fields over finite fields with odd characteristics are distinct from the even case given in \cite{JMX22}, since there are non-negligible differences among the determination of cyclic automorphisms $\sigma$, the equivalence relations on $\mL(Q)\setminus \{0\}$, representative elements of equivalence classes, the construction of binary sequences and the estimation of their correlations.

\begin{table*}
	\setlength{\abovecaptionskip}{0pt}%
	\setlength{\belowcaptionskip}{10pt}%
	\caption{PARAMETERS OF SEQUENCE FAMILIES}
	\centering
{
	\begin{tabular}{@{}cccc@{}}
		\toprule
		Sequence                             & Length N                                & Family Size                           &  Bound of Correlation                       \\ \midrule
		\multicolumn{1}{|c|}{Gold(odd) \cite{Span}}      & \multicolumn{1}{c|}{$2^n-1$, $n$ odd}                   & \multicolumn{1}{c|}{$N+2$}             & \multicolumn{1}{c|}{$1+\sqrt{2}\sqrt{N+1}$} \\ \midrule
		\multicolumn{1}{|c|}{Gold(even) \cite{Span}}     & \multicolumn{1}{c|}{$2^n-1,n=4k+2$}            & \multicolumn{1}{c|}{$N+2$}             & \multicolumn{1}{c|}{$1+2\sqrt{N+1}$}        \\ \midrule
		\multicolumn{1}{|c|}{Kasami(small) \cite{Kas}}  & \multicolumn{1}{c|}{$2^n-1, n \text{ even}$}            & \multicolumn{1}{c|}{$\sqrt{N+1}$}      & \multicolumn{1}{c|}{$1+\sqrt{N+1}$}         \\ \midrule
		\multicolumn{1}{|c|}{Kasami(large) \cite{Kas}}  & \multicolumn{1}{c|}{$2^n-1,n=4k+2$}            & \multicolumn{1}{c|}{$(N+2)\sqrt{N+1}$} & \multicolumn{1}{c|}{$1+2\sqrt{N+1}$}        \\ \midrule
		\multicolumn{1}{|c|}{Bent \cite{Bent}}           & \multicolumn{1}{c|}{$2^n-1, n=4k$}                & \multicolumn{1}{c|}{$\sqrt{N+1}$}         & \multicolumn{1}{c|}{$1+\sqrt{N+1}$}          \\ \midrule
		\multicolumn{1}{|c|}{No \cite{No}}           & \multicolumn{1}{c|}{$2^n-1, n \text{ even}$}                & \multicolumn{1}{c|}{$\sqrt{N+1}$}         & \multicolumn{1}{c|}{$1+\sqrt{N+1}$}          \\ \midrule
		\multicolumn{1}{|c|}{Trace Norm \cite{Klap}}           & \multicolumn{1}{c|}{$2^n-1, n \text{ even}$}                & \multicolumn{1}{c|}{$\sqrt{N+1}$}         & \multicolumn{1}{c|}{$1+\sqrt{N+1}$}          \\ \midrule
		\multicolumn{1}{|c|}{Tang et al. \cite{Tang3}}  & \multicolumn{1}{c|}{$2(2^n-1),n$ odd }            & \multicolumn{1}{c|}{$\frac{N}{2}+1$} & \multicolumn{1}{c|}{$2+\sqrt{N+2}$}        \\ \midrule
		\multicolumn{1}{|c|}{Gong \cite{gong}}  & \multicolumn{1}{c|}{$(2^n-1)^2$, $2^n-1$ prime}            & \multicolumn{1}{c|}{$\sqrt{N}$ }& \multicolumn{1}{c|}{$3+2\sqrt{N}$}        \\ \midrule
	\multicolumn{1}{|c|}{Jin et al. \cite{JMX22}} &\multicolumn{1}{c|}{{\bf $2^n+1$}}  &\multicolumn{1}{c|}{{\bf $N-2$}}         & \multicolumn{1}{c|}{\bf {$2\sqrt{N-1}$}}          \\\midrule

		\multicolumn{1}{|c|}{Paterson \cite{Pat}}       & \multicolumn{1}{c|}{$p^2$, $p$ prime, $p\equiv 3(\text{mod } 4)$}    & \multicolumn{1}{c|}{$N$}               & \multicolumn{1}{c|}{$5+4\sqrt{N}$}          \\ \midrule
		\multicolumn{1}{|c|}{Paterson \cite{Pat}}  & \multicolumn{1}{c|}{$p^2$, $p$ prime, $p\equiv 3(\text{mod } 4)$}    & \multicolumn{1}{c|}{$\sqrt{N}+1$}      & \multicolumn{1}{c|}{$3+2\sqrt{N}$}          \\ \midrule
		\multicolumn{1}{|c|}{Weil \cite{Rushanan}}           & \multicolumn{1}{c|}{$p$, odd prime}                & \multicolumn{1}{c|}{$(N-1)/2$}         & \multicolumn{1}{c|}{$5+2\sqrt{N}$}          \\ \midrule

		\multicolumn{1}{|c|}{Jin et al. \cite{Jin21}} & \multicolumn{1}{c|}{$p-1$, $p\ge17$ odd prime power} & \multicolumn{1}{c|}{$N+3$}               & \multicolumn{1}{c|}{$6+2\sqrt{N+1}$}          \\ \midrule
		\multicolumn{1}{|c|}{Jin et al. \cite{Jin21}} & \multicolumn{1}{c|}{$p-1$, $p\ge11$ odd prime power} & \multicolumn{1}{c|}{$N/2$}         & \multicolumn{1}{c|}{$2+2\sqrt{N+1}$}          \\ \midrule
		\multicolumn{1}{|c|}{Jin et al. \cite{Jin21}} & \multicolumn{1}{c|}{$p$, $p\ge 17$ odd prime} & \multicolumn{1}{c|}{$N$}               & \multicolumn{1}{c|}{$5+2\sqrt{N}$}          \\ \midrule
		\multicolumn{1}{|c|}{Jin et al. \cite{Jin21}} & \multicolumn{1}{c|}{$p$, $p \ge 11$ odd prime} & \multicolumn{1}{c|}{$(N-1)/2$}         & \multicolumn{1}{c|}{$1+2\sqrt{N}$}          \\ \midrule
			\multicolumn{1}{|c|}{{\bf Our construction}} & \multicolumn{1}{c|}{ {$p^n+1$, $p$ odd prime}} & \multicolumn{1}{c|}{{\bf $N-3$}}         & \multicolumn{1}{c|}{\bf {$4+2\sqrt{N-1}$}}          \\
\bottomrule
	\end{tabular}
}
\end{table*}

\subsection{Organization of this paper}
In Section \ref{sec:2}, we provide preliminaries on binary sequences, rational function fields, extension theory of function fields, cyclotomic function fields and Kummer extensions.
In Section \ref{sec:3}, we present a theoretical construction of binary sequences with a low correlation via cyclotomic function fields over finite fields with odd characteristics. In Section \ref{sec:4}, we give an algorithm to generate such a family of binary sequences and provide many numerical results with the help of software Sage.

\section{Preliminaries}\label{sec:2}
In this section, we present some preliminaries on definitions and basic theory of binary sequences and their correlation, rational function fields, extension theory of function fields,  cyclotomic function fields and Kummer extensions.

\subsection{Binary sequences and their correlation}
Let $N$ be a positive integer. Let $\mS$ be a family of binary sequences with length $N$. For every sequence $\bs=(s_0,s_1,\dots,s_{N-1})\in\mS$ with $s_i\in\{1,-1\}$, we define the autocorrelation of $\bs$ at delay $t$ for $1\le t\le N-1$ by
\begin{equation}\label{eq:2.1}
A_t(\bs):=\sum_{i=0}^{N-1}s_is_{i+t},
\end{equation}
where $i+t$ means the least non-negative integer after taking modulo $N$. Consider two distinct binary sequences $\bu=(u_0,u_1,\dots,u_{N-1})$ and $ \bv=(v_0,v_1,\dots,v_{N-1})$ in $\mS$, we define the cross-correlation of $\bu$ and $\bv$ at delay $0\le t\le N-1$ by
\begin{equation}\label{eq:2.2}
C_t(\bu,\bv):=\sum_{i=0}^{N-1}u_iv_{i+t}.
\end{equation}
The correlation of the family of sequences $\mS$ is defined by 
\begin{equation}\label{eq:2.3}
Cor(\mS):=\max\left\{\max_{\bs\in\mS, 1\le t\le N-1}\{|A_t(\bs)|\},\max_{\bu\neq\bv\in\mS, 0\le t\le N-1}\{|C_t(\bu,\bv)|\}\right\}.
\end{equation}

\subsection{Rational function fields}
In this subsection, we introduce basic facts of the rational function field. The reader may refer to \cite{HKT08,JMX20,JMX21,St09} for more details.
Let $q$ be a prime power and $\F_q$ be the finite field with $q$ elements.
Let $K$ be the rational function field $\F_q(x)$, where $x$ is a transcendental element over $\F_q$.
Every finite place $P$ of $K$ corresponds to a monic irreducible polynomial $p(x)$ in $\F_q[x]$, and its degree $\deg(P)$ is equal to the degree of polynomial $p(x)$.
The pole of $x$ is called the infinity place of $K$ and denoted by $P_{\infty}$.
In fact, there are exactly $q+1$ rational places of $K$, i.e., the place $P_{\Ga}$ corresponding to $x-\Ga$ for each $\Ga\in \F_q$ and the infinite place $P_{\infty}$.

Let $P$ be a rational place of $K$ and let $\mathcal{O}_P$ be its valuation ring. For any $f\in \mathcal{O}_P$, $f(P)$ is defined to be the residue class of $f$ modulo $P$ in $\mathcal{O}_P/P\cong \F_q$; otherwise $f(P)=\infty$ for any $f\in K\setminus \mathcal{O}_P$.
If $f(x)=g(x)/h(x)\in K$ is written as a quotient of relatively prime polynomials, then the residue class map can be determined explicitly as follows
$$f(P_\Ga)=\begin{cases} g(\Ga)/h(\Ga) & \text{ if } h(\Ga)\neq 0 \\ \infty & \text{ if } h(\Ga)=0\end{cases} $$
for any $\Ga\in\F_q$. Moreover, if $\deg(g(x))<\deg(h(x))$, then $f(P_{\infty})=0.$

The automorphism group $\Aut(K/\F_q)$ of the rational function field $K$ over $\F_q$ is isomorphic to the projective general linear group $\PGL_2(\F_q)$.
Any automorphism $\Gs\in \Aut(K/\F_q)$ is uniquely determined by $\Gs(x)$ with the form
$$\Gs(x)=\frac{ax+b}{cx+d}$$
for some constants $a,b,c,d\in\F_q$ with $ad-bc\neq0$.
In particular, there exists an automorphism $\s\in  \Aut(K/\F_q)$ with order $q+1$ such that $\s$ acts cyclically on all rational places of $K$ from \cite{JMX20}.

\subsection{Extension theory of function fields}
Let $F/\F_q$ be an algebraic function field with genus $g$ over the full constant field $\F_q$. Let $\PP_F$ denote the set of places of $F$. Any place with degree one is called rational.  From the Serre bound \cite[Theorem 5.3.1]{St09}, the number $N(F)$ of rational places of $F$ is upper bounded by $$|N(F)-q-1|\le g \lfloor 2\sqrt{q}\rfloor. $$
Here $\lfloor x \rfloor$ stands for the integer part of $x\in \mathbb{R}$.

Let $\nu_P$ be the normalized discrete valuation of $F$ with respect to the place $P\in \PP_F$. The principal divisor of a nonzero element $z\in F$ is given by $(z):=\sum_{P\in \mathbb{P}_F}\nu_P(z)P$.
For a divisor $G$ of $F/\F_q$, the Riemann-Roch space associated to $G$ is defined by
\[\mL(G):=\{z\in F^*:\; (z)+G\ge 0\}\cup\{0\}.\]
If $\deg(G)\ge 2g-1$, then $\mL(G)$ is a vector space over $\F_q$ of dimension $\deg(G)-g+1$ from the Riemann-Roch theorem \cite[Theorem 1.5.17]{St09}.

Let $E/\F_q$ be a finite extension of function field $F/\F_q$. The Hurwitz genus formula \cite[Theorem 3.4.13]{St09} yields
$$2{g(E)}-2=[E:F]\cdot (2g(F)-2)+\deg \text{ Diff}(E/F),$$
where $\text{Diff}(E/F)$ stands for the different of $E/F$.
For $P\in \PP_F$ and $Q\in \PP_E$ with $Q|P$, let $d(Q|P), e(Q|P)$ be the different exponent and ramification index of $Q|P$, respectively. Then the different of $E/F$ can be given by
$\text{Diff}(E/F)=\sum_{Q\in\PP_E} d(Q|P) Q.$
If $p\nmid e(Q|P)$, then $d(Q|P)=e(Q|P)-1$ from Dedekind's Different Theorem \cite[Theorem 3.5.1]{St09}.

Let $\Aut(F/\F_q)$ denote the automorphism group of $F$ over $\F_q$, i.e.,
$\Aut(F/\F_q)=\{\Gs: F\rightarrow F |\; \Gs  \mbox{ is an } \F_q\mbox{-automorphism of } F\}.$
We can consider the group action of automorphism group $\Aut(F/\F_q)$ on the set of places of $F$.
From \cite[Lemma 1]{NX14}, we have the following results.

\begin{lemma}\label{lem:2.1}
For any automorphism $\s\in \Aut(F/\F_q)$, $P\in \mathbb{P}_F$ and $f\in F$, we have
\begin{itemize}
\item[(1)] $\deg(\s(P))=\deg(P)$;
\item[(2)] $\nu_{\s(P)}(\s(f))=\nu_P(f)$;
\item[(3)] $\s(f)(\s(P))=f(P)$ provided that $\nu_P(f)\ge 0$.
\end{itemize}
\end{lemma}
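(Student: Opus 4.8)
The plan is to derive all three statements from a single structural observation: an $\F_q$-automorphism $\s$ carries the valuation ring $\mathcal{O}_P$ onto another valuation ring of $F/\F_q$, namely the one attached to the place $\s(P)$. Concretely, recall that the place $\s(P)$ is defined by its valuation ring $\mathcal{O}_{\s(P)}:=\s(\mathcal{O}_P)$, with maximal ideal $\s(P)$ equal to the image under $\s$ of the maximal ideal $P$ of $\mathcal{O}_P$. Since $\s$ is a field automorphism, $\s(\mathcal{O}_P)$ is again a valuation ring of $F$, and because $\s$ fixes $\F_q$ pointwise the full constant field is preserved, so $\s(P)$ is a genuine place of $F/\F_q$. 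This is the only nontrivial ingredient; from here each item is a short verification.

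For part (1), $\s$ restricts to a ring isomorphism $\mathcal{O}_P\to\mathcal{O}_{\s(P)}$ sending the maximal ideal $P$ onto $\s(P)$, hence induces an isomorphism of residue fields $\bar\s:\mathcal{O}_P/P\to\mathcal{O}_{\s(P)}/\s(P)$. As $\s$ is the identity on $\F_q$, this $\bar\s$ is $\F_q$-linear, so the two residue fields have equal dimension over $\F_q$; that is, $\deg(\s(P))=[\mathcal{O}_{\s(P)}/\s(P):\F_q]=[\mathcal{O}_P/P:\F_q]=\deg(P)$.

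For part (2), I would show that $w:=\nu_{\s(P)}\circ\s$ is a normalized discrete valuation of $F$ with the same valuation ring as $\nu_P$. Indeed $w$ is surjective onto $\ZZ$ (because $\s$ is onto and $\nu_{\s(P)}$ is), and its valuation ring is $\{f\in F:\nu_{\s(P)}(\s(f))\ge0\}=\{f:\s(f)\in\mathcal{O}_{\s(P)}\}=\s^{-1}(\mathcal{O}_{\s(P)})=\mathcal{O}_P$. Since a normalized discrete valuation is uniquely determined by its valuation ring, $w=\nu_P$, which is exactly the asserted equality $\nu_{\s(P)}(\s(f))=\nu_P(f)$. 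Equivalently, if $t$ is a local uniformizer at $P$ then $\s(t)$ is one at $\s(P)$, and this forces agreement of valuations on all of $F$.

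For part (3), assume $\nu_P(f)\ge0$, i.e. $f\in\mathcal{O}_P$. By part (2), $\nu_{\s(P)}(\s(f))=\nu_P(f)\ge0$, so $\s(f)\in\mathcal{O}_{\s(P)}$ and the residue $\s(f)(\s(P))$ is defined. By construction the induced map $\bar\s$ sends the residue class $f(P)=f+P$ to $\s(f)+\s(P)=\s(f)(\s(P))$. When $P$ is rational -- the only case relevant to this paper -- both residue fields equal $\F_q$ and $\bar\s$ fixes $\F_q$ pointwise, so $\bar\s$ is the identity and $\s(f)(\s(P))=f(P)$; in general the equality is read through the canonical identification $\bar\s$ of the two residue fields. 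The whole argument is essentially definitional, so there is no serious obstacle. The only point demanding genuine care is part (3): one must be explicit that $f(P)$ lives in the residue field $\mathcal{O}_P/P$, and that the identification of residue fields induced by $\s$ collapses to the identity on $\F_q$ precisely because $\s$ fixes the constant field and $P$ is rational.
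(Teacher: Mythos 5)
Your proof is correct, but there is nothing in the paper to compare it against: the paper does not prove this lemma, it simply imports it by citing \cite[Lemma~1]{NX14} (``From [Lemma 1, NX14], we have the following results''). What you have written is the self-contained definitional argument that the cited reference encapsulates, organized around the single right observation: since $\s$ fixes $\F_q$ pointwise, it carries $\mathcal{O}_P$ isomorphically onto the valuation ring $\mathcal{O}_{\s(P)}=\s(\mathcal{O}_P)$ defining the place $\s(P)$. From this, (1) follows from the induced $\F_q$-isomorphism of residue fields, (2) from the fact that $\nu_{\s(P)}\circ\s$ is a normalized discrete valuation with valuation ring $\mathcal{O}_P$ and hence equals $\nu_P$ (uniqueness of the normalized valuation attached to a valuation ring), and (3) from chasing $f+P$ through the residue-field isomorphism. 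You were also right to isolate the one genuinely delicate point: for $\deg P>1$ the equality $\s(f)(\s(P))=f(P)$ must be read through the identification $\bar\s$ of the two residue fields, and it is only for rational $P$ (where both residue fields are canonically $\F_q$ and $\bar\s$ restricts to the identity) that it holds literally; since the paper only ever evaluates functions at the rational places $P_{\s^j}$ when it invokes item (3), this is exactly the case that matters. In short, your proof supplies, correctly and with appropriate care, the argument the paper outsources to its reference.
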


From Lemma \ref{lem:2.1}, we have $\sigma(\mathcal{L}(D)) = \mathcal{L}(\sigma(D))$ for any divisor $D$ of $F$. In particular, if $\s(P)=P$, then $\s(\mL(rP))=\mL(\s(rP))=\mL(rP)$ for any $r\in \mathbb{N}$.

\subsection{Cyclotomic function fields}
The basic theory of cyclotomic function fields was developed in the language of function fields by Hayes \cite{Ha74}.
 Let $x$ be an indeterminate over $\F_q$, $R$ be the polynomial ring $\F_q[x]$, $K$ be the rational function field $\F_q(x)$, and $K^{ac}$ be the algebraic closure of $K$. Let $\varphi$ be the endomorphism of $K^{ac}$ given by $$\varphi(z)=z^q+xz $$  for all $z\in K^{ac}$. Define a ring homomorphism
$$R\rightarrow \text{End}_{\mathbb{F}_q}(K^{ac}), f(x)\mapsto f(\varphi).$$
Then the $\F_q$-vector space of $K^{ac}$ is made into an $R$-module by introducing the following action of $R$ on $K^{ac}$, that is,
$$ z^{f(x)}=f(\varphi)(z)$$  for all $f(x)\in R$ and $z\in K^{ac}$. For a nonzero polynomial $M\in R$, we consider the set of $M$-torsion points of $K^{ac}$ defined by $$\Lambda_M=\{z\in K^{ac}| z^M=0\}.$$
The cyclotomic function field over $K$ with modulus $M$ is defined by the subfield of $K^{ac}$ generated over $K$ by all elements of $\Lambda_M$, and it is denoted by $K(\Lambda_M)$.  Let $p(x)=x^2+ax+b$ be an irreducible polynomial in $\mathbb{F}_q[x]$. In particular, we have the following facts from \cite{MXY16, JMX22}.

\begin{prop}\label{prop:2.2}
Let $p(x)=x^2+ax+b$ be an irreducible polynomial in $\mathbb{F}_q[x]$.
Let $F$ be the cyclotomic function field $K(\Lambda_{p(x)})$ with modulus $p(x)$ over $K$.
Then the following results hold:
\begin{itemize}
\item[\rm (i)] $[F:K]=q^2-1$ and $F= K(\Gl), \text{ where } \Gl^{q^2-1} +(x^q+x+a)\Gl^{q-1} +x^2+ax+b = 0.$
\item[\rm (ii)]   There is a unique place of $F$ lying over $p(x)$ which is totally ramified in $F/K$.
\item[\rm (iii)] The infinite place $\infty$ of $K$ splits into $q+1$ rational places, each with ramification index $q-1$ in the extension $F/K$.
\item[\rm (iv)] All other places of $K$ except $p(x)$ and $\infty$ are unramified in $F/K$.
\item[\rm (v)] The Galois group of $F$ over $K$ is $\Gal(F/K)\cong (\F_q[x]/(p(x)))^*$.  Moreover, the automorphism $\sigma_f\in \Gal(F/K)$ associated to $\overline{f}\in (\mathbb{F}_q[x]/(p(x)))^*$ is determined by $\sigma_f(\lambda)=\lambda^f$.
\end{itemize}
\end{prop}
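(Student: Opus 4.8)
The plan is to establish Proposition~\ref{prop:2.2} by analyzing the Carlitz module structure at the prime $p(x)$ and then at the infinite place, treating the ramification behavior at each place of $K$ separately. First I would recall the general theory of cyclotomic function fields with modulus a monic irreducible polynomial $P=p(x)$ of degree $d$: the torsion module $\Lambda_{p(x)}$ is a cyclic $R/(p(x))$-module, so it is isomorphic as an $\F_q[x]/(p(x))$-module to the field $\F_q[x]/(p(x))\cong\F_{q^d}$. Its nonzero elements number $q^d-1$, and a generator $\lambda$ (a primitive $p(x)$-torsion point) satisfies the minimal polynomial
\[
\Phi_{p(x)}(T)=\frac{T^{p(x)}}{T}=T^{q^d-1}+\cdots,
\]
where $T^{p(x)}$ denotes the Carlitz action $\varphi$ applied according to the coefficients of $p(x)$. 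For our quadratic $p(x)=x^2+ax+b$ one has $d=2$, so $[F:K]=q^2-1$, giving part~(i); the explicit equation follows by unwinding $\lambda^{p(x)}=\varphi^2(\lambda)+a\,\varphi(\lambda)+b\,\lambda=0$ using $\varphi(\lambda)=\lambda^q+x\lambda$ and then dividing by $\lambda$ to obtain the stated degree-$(q^2-1)$ relation in $\Gl^{q-1}$. Part~(v) is the standard identification of the Galois action: each $\overline f\in(\F_q[x]/(p(x)))^*$ acts by $\lambda\mapsto\lambda^f=f(\varphi)(\lambda)$, and this is a group isomorphism $\Gal(F/K)\cong(\F_q[x]/(p(x)))^*$, a cyclic group of order $q^2-1$.

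Next I would handle the ramification at the finite place $p(x)$ itself, which gives parts~(ii) and the ``$p(x)$'' exclusion in~(iv). The key fact is that the extension $F/K$ is totally and tamely ramified at $P=p(x)$: the valuation of the primitive torsion point $\lambda$ at the place above $p(x)$ is controlled by the Eisenstein-type property of $\Phi_{p(x)}(T)$ at $P$, so there is a single place $Q$ above $p(x)$ with $e(Q\mid P)=q^2-1=[F:K]$, establishing~(ii). For every other finite place $P'\neq p(x)$ of $K$, the conductor-discriminant machinery for cyclotomic function fields shows $P'$ is unramified, since only the modulus prime ramifies in the finite part; this gives the unramified claim in~(iv) for finite places.

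Then I would treat the infinite place, which is the content of~(iii) and the unramified claim in~(iv) for $\infty$. Here I would invoke the known decomposition behavior of $\infty$ in cyclotomic function fields: the inertia group at $\infty$ is exactly the image of $\F_q^*$ inside $(\F_q[x]/(p(x)))^*$, which has order $q-1$. Consequently $\infty$ has ramification index $q-1$, and the number of places of $F$ above $\infty$ equals $[F:K]/(q-1)=(q^2-1)/(q-1)=q+1$; one checks these places are rational by computing residue fields, yielding~(iii). Because the ramification index $q-1$ is coprime to the characteristic (as $p\nmid q-1$), the ramification at $\infty$ is tame, and there are no wild contributions anywhere.

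The main obstacle, and the step I would spend the most care on, is the precise verification of the totally ramified behavior at $p(x)$ together with the explicit minimal equation in part~(i): one must confirm that $\Phi_{p(x)}(T)$ is genuinely $P$-Eisenstein so that $Q\mid P$ is totally ramified of index $q^2-1$, and simultaneously that the displayed relation $\Gl^{q^2-1}+(x^q+x+a)\Gl^{q-1}+x^2+ax+b=0$ is correct after substituting the Carlitz action and simplifying. Rather than recompute the iterated $\varphi$-action by hand, I would cite the cited sources \cite{MXY16, JMX22} for the explicit form, noting only that the factor $\Gl^{q-1}$ appears because the $\F_q^*$-action fixes $\Gl^{q-1}$ up to the norm structure, and that degree and ramification counts are all forced once~(i) and the total ramification at $P$ are in hand. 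The remaining assertions~(ii)--(v) then follow from the standard ramification theory of the Carlitz module as outlined above.
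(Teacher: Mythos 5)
The paper itself offers no proof of Proposition~\ref{prop:2.2}: it simply quotes these facts from \cite{MXY16, JMX22} (and ultimately Hayes \cite{Ha74}), so there is no internal argument to compare against. Your proposal is a correct reconstruction of the canonical argument that those references give, and the steps you single out do check out: the Carlitz computation yields $\varphi^2(z)+a\varphi(z)+bz=z^{q^2}+(x^q+x+a)z^q+(x^2+ax+b)z$, so dividing $z^{p(x)}$ by $z$ gives exactly the displayed equation in (i); the resulting polynomial $z^{q^2-1}+(x^q+x+a)z^{q-1}+p(x)$ is genuinely Eisenstein at $p(x)$ because $p(x)\mid x^q+x+a$ (the roots of $p(x)$ are $\alpha$ and $\alpha^q$ with $\alpha+\alpha^q=-a$), which gives (ii), and tameness holds since $\gcd(q^2-1,p)=1$; and the inertia group $\F_q^*$ at $\infty$ gives (iii) via Hayes' theorem. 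Two small points of care: your aside that ``the factor $\lambda^{q-1}$ appears because the $\F_q^*$-action fixes $\lambda^{q-1}$ up to the norm structure'' is imprecise and unnecessary, since the direct division by $z$ you already describe produces it; and the count of $q+1$ places above $\infty$ requires not just that the inertia group has order $q-1$ but that the decomposition group equals the inertia group (residue degree $1$), which is part of the standard statement you invoke, so it should be cited as such rather than derived from the count. With those caveats, your outline is sound and is essentially the proof the paper delegates to its references.
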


\subsection{Kummer extensions}
The theory of Kummer extension of function fields can be summarized as follows from \cite[Proposition 3.7.3 and Corollary 3.7.4]{St09}.
\begin{prop}\label{prop:2.3}
Let $K$ be the rational function field $\F_q(x)$ and $n$ be a positive divisor of $q-1$. Suppose that $u\in K$ is an element satisfying $u\neq \omega^d \text{ for all } \omega\in F \text{ and } d|n$ with $d>1.$
Let $F$ be the Kummer extension over $K$ defined by \[F=K(y) \text{ with } y^n=u.\] Then we have:
\begin{itemize}
\item[(a)] The polynomial $\phi(T)=T^n-u$ is the minimal polynomial of $y$ over $K$. The extension $F/K$ is a cyclic extension of degree $n$, and the automorphisms of $F/K$ are given by $\sigma(y)=\zeta y$, where $\zeta\in \F_q$ is an $n$-th root of unity.
\item[(b)] Let $Q\in \PP_F$ be an extension of $P\in \mathbb{P}_K$. Let $r_P$ be the greatest common divisor of $n$ and $\nu_P(u)$, i.e., $r_P=\gcd(n,\nu_P(u))$. Then one has
\[e(Q|P)=\frac{n}{r_P}\quad \text{and} \quad d(Q|P)=\frac{n}{r_P}-1.\]
\item[(c)] Assume that there is a place $R\in \PP_F$ such that $\gcd(\nu_Q(u),n)=1$. Then $\F_q$ is the full constant field of $F$.
\end{itemize}
\end{prop}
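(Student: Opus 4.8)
The plan is to treat the three assertions in turn, exploiting that $n\mid q-1$ forces $\F_q$ to contain all $n$ distinct $n$-th roots of unity (as $\F_q^*$ is cyclic of order $q-1$). For (a) I would first show that $\phi(T)=T^n-u$ is irreducible over $K$: the hypothesis that $u$ is not a $d$-th power in $K$ for any divisor $d>1$ of $n$—in particular for each prime dividing $n$—is precisely the classical irreducibility criterion for a binomial $T^n-u$ over a field containing the $n$-th roots of unity. Granting this, $\phi$ is the minimal polynomial of $y$ and $[F:K]=n$. I would then check that each $n$-th root of unity $\zeta\in\F_q$ yields a $K$-automorphism $y\mapsto\zeta y$ of $F$, since $(\zeta y)^n=\zeta^n y^n=u$ and $K(\zeta y)=K(y)$; these $n$ automorphisms are distinct, so $F/K$ is Galois with $\Gal(F/K)$ isomorphic to the cyclic group $\mu_n$ of $n$-th roots of unity.

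For (b), fix $P\in\PP_K$ with $m=\nu_P(u)$ and $Q\in\PP_F$ above $P$, and set $e=e(Q\mid P)$. Applying $\nu_Q$ to $y^n=u$ and using $\nu_Q(u)=e\,\nu_P(u)$ gives
\[
n\,\nu_Q(y)=\nu_Q(u)=e\,m .
\]
Since $\nu_Q(y)\in\ZZ$, this yields $n\mid em$, and cancelling $r_P=\gcd(n,m)$ shows $n/r_P\mid e$, hence $e\ge n/r_P$. For the reverse inequality I would pass to the intermediate field $K(w)$ with $w=y^{\,n/r_P}$, so that $w^{r_P}=u$; because $r_P\mid m$, the element $u$ is, locally at $P$, a unit times an $r_P$-th power of a uniformizer, so $K(w)/K$ is unramified at $P$. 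As $[F:K(w)]=n/r_P$, all ramification of $P$ occurs in $F/K(w)$, whence $e\le n/r_P$ and therefore $e=n/r_P$. Finally, since $q$ is a power of $p$ and $n\mid q-1$ we have $\gcd(n,p)=1$, so $Q\mid P$ is tame and Dedekind's Different Theorem (quoted above) gives $d(Q\mid P)=e(Q\mid P)-1=n/r_P-1$.

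For (c), a place with $r_P=1$ is, by (b), totally ramified with $e(Q\mid P)=n=[F:K]$. Let $\F_{q'}$ be the full constant field of $F$ and $\ell=[\F_{q'}:\F_q]$. The constant field extension $\F_{q'}(x)/K$ is unramified at every place of $K$, so $P$ is unramified in $\F_{q'}(x)/K$; multiplicativity of ramification indices along $K\subseteq\F_{q'}(x)\subseteq F$ then forces the full index $n$ into $F/\F_{q'}(x)$, giving $n\le[F:\F_{q'}(x)]=n/\ell$. Hence $\ell=1$ and $\F_q$ is the full constant field.

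I expect the main obstacle to be the exact ramification index in (b): the divisibility $n/r_P\mid e$ is immediate, but the matching upper bound hinges on the auxiliary fact that adjoining $u^{1/r_P}$ gives an extension unramified at $P$—equivalently, an invocation of Abhyankar's lemma. The binomial irreducibility criterion, the cyclic Galois description, the tameness argument, and the constant-field computation are comparatively routine once this point is settled.
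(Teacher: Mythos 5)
The paper offers no proof of this proposition at all: it is quoted as a known result from Stichtenoth \cite[Proposition 3.7.3 and Corollary 3.7.4]{St09}, so there is no internal argument to compare yours against. Your reconstruction is correct and amounts to the standard textbook proof, which is exactly what a self-contained treatment would require. Two refinements are worth recording. First, in (a) your appeal to the binomial irreducibility criterion is valid in the form that holds over fields containing all $n$-th roots of unity (which $\F_q$ does, since $n\mid q-1$); this also disposes of the usual caveat of the general criterion for $4\mid n$ (that $u\notin -4K^4$), because $4\mid n\mid q-1$ forces $\sqrt{-1}\in\F_q$, whence $-4K^4\subseteq K^2$ and that case is already excluded by your hypothesis on squares. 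Second, in (b) the divisibility $n/r_P\mid e$ and the reduction to the intermediate field $K(w)$ with $w=y^{n/r_P}$ are exactly right, but the fact you flag as the main obstacle is not really an instance of Abhyankar's lemma: writing $u=t^{r_Pm'}v$ with $t$ a uniformizer and $v$ a unit at $P$, one has $K(w)=K(\beta)$ with $\beta=wt^{-m'}$ and $\beta^{r_P}=v$, and then the different exponent at any place above $P$ is at most $\nu\bigl(r_P\beta^{r_P-1}\bigr)=0$ since $\gcd(r_P,p)=1$ and $\beta$ is a unit; Dedekind's theorem then forces $e=1$ directly. Your argument for (c), playing total ramification at a place with $r_P=1$ against the fact that constant field extensions are unramified everywhere, is likewise the standard one and is correct.
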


\section{Binary sequences with a low correlation of length $q+1$}\label{sec:3}
Let $p$ be an odd prime, $m$ be a positive integer, $q=p^m$ be a prime power and $\F_q$ be the finite field with $q$ elements.
In this section, we will construct a family of binary sequences with a low correlation of length $q+1$ via cyclotomic function fields over the finite field $\F_q$ with an odd characteristic $p$.

\subsection{Fixed subfields of cyclotomic function fields}\label{subsec:3.1}
Let $K$ be the rational function field $\F_q(x)$ defined over $\F_q$.
Let $p(x)=x^2+ax+b$ be a primitive irreducible polynomial in $\mathbb{F}_q[x]$.
Let $F$ be the cyclotomic function field $K(\Lambda_{p(x)})$ with modulus $p(x)$ over $K$.
From Proposition \ref{prop:2.2}, we have $F=K(\lambda)$ and the Galois group $\Gal(F/K)$ is an abelian group of order $q^2-1$.
Moreover, we have the following result which is similar as \cite[Proposition 3.2]{JMX22} for the case of odd characteristics.

\begin{lemma}\label{lem:3.1}
Let $K$ be the rational function field $\F_q(x)$.
Let $p(x)=x^2+ax+b$ be a primitive irreducible polynomial in $\mathbb{F}_q[x]$.
Let $F$ be the cyclotomic function field $K(\Lambda_{p(x)})$ with modulus $p(x)$ over $K$.
Then the Galois group $\Gal(F/K)$ is a cyclic group of order $q^2-1$.
\end{lemma}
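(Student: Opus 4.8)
The plan is to show that the abelian group $\Gal(F/K)$, which by Proposition~\ref{prop:2.2}(v) is isomorphic to $(\F_q[x]/(p(x)))^*$, is in fact cyclic, by exhibiting a generator coming from the primitivity hypothesis on $p(x)$. First I would recall that since $p(x)=x^2+ax+b$ is irreducible of degree $2$ over $\F_q$, the quotient ring $\F_q[x]/(p(x))$ is the finite field $\F_{q^2}$. Consequently its multiplicative group $(\F_q[x]/(p(x)))^*\cong \F_{q^2}^*$ is a cyclic group of order $q^2-1$. Combining this with the isomorphism in Proposition~\ref{prop:2.2}(v), we immediately get that $\Gal(F/K)$ is cyclic of order $q^2-1$, as claimed.

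The role of the word \emph{primitive} in the hypothesis is what I would make precise next, since the cyclicity of $\F_{q^2}^*$ alone already forces $\Gal(F/K)$ to be cyclic. The point is that "primitive irreducible polynomial'' means that a root of $p(x)$ is a primitive element of $\F_{q^2}$, i.e.\ the residue class $\overline{x}$ generates $\F_{q^2}^*$; equivalently, $\overline{x}$ has multiplicative order exactly $q^2-1$ in $(\F_q[x]/(p(x)))^*$. Thus the automorphism $\sigma_x\in\Gal(F/K)$ associated to $\overline{x}$ under the isomorphism of Proposition~\ref{prop:2.2}(v) — that is, the map determined by $\sigma_x(\lambda)=\lambda^x=\varphi(\lambda)=\lambda^q+x\lambda$ — is an explicit generator of the full Galois group. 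This is the added value of the primitivity assumption: it pins down a concrete generator, which will be needed later to produce the order-$(q+1)$ automorphism $\sigma$ on the fixed subfield $E$.

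I would therefore structure the proof in two short movements: (1) identify $\F_q[x]/(p(x))$ with $\F_{q^2}$ and invoke the standard fact that the multiplicative group of a finite field is cyclic to conclude $\Gal(F/K)$ is cyclic of order $q^2-1$; (2) remark that primitivity of $p(x)$ exhibits $\sigma_x$ (with $\sigma_x(\lambda)=\lambda^q+x\lambda$) as an explicit generator. No genuine obstacle arises here, as the statement is essentially a translation of Proposition~\ref{prop:2.2}(v) through the cyclicity of $\F_{q^2}^*$; the only thing requiring a little care is being explicit about why the primitivity hypothesis is the natural one and what generator it yields, so that the subsequent construction of the fixed field $E$ and its order-$(q+1)$ automorphism rests on a firm footing. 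This mirrors the even-characteristic argument of \cite[Proposition~3.2]{JMX22}, and I expect the bulk of the work in this paper to lie not in Lemma~\ref{lem:3.1} itself but in the later steps where $\sigma$ and the Riemann--Roch space $\mL(Q)$ are analyzed.
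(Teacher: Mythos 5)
Your proposal is correct, but its primary argument is routed differently from the paper's. The paper proves cyclicity \emph{directly through primitivity}: it takes the automorphism $\eta\in\Gal(F/K)$ with $\eta(\lambda)=\lambda^x$, notes $\eta^i(\lambda)=\lambda^{x^i}$, so $\eta^i=\mathrm{id}$ if and only if $x^i\equiv 1 \pmod{p(x)}$, and concludes from primitivity of $p(x)$ that $\eta$ has order $q^2-1$; since $|\Gal(F/K)|=q^2-1$, the group is cyclic with generator $\eta$. Your first movement instead deduces cyclicity from the standard fact that $(\F_q[x]/(p(x)))^*\cong\F_{q^2}^*$ is cyclic, which requires no primitivity at all — this is more general and cleanly isolates what the hypothesis is actually for. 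Your second movement (primitivity makes $\overline{x}$ a generator, hence $\sigma_x:\lambda\mapsto\lambda^q+x\lambda$ generates $\Gal(F/K)$) is exactly the paper's proof. What each buys: your route clarifies that the bare statement of the lemma holds for any irreducible quadratic modulus; the paper's route is what the rest of Section 3 actually needs, since the explicit generator $\eta$ is the object used downstream (Proposition 3.2 sets $\tau=\eta^{q+1}$ and needs $b$ to generate $\F_q^*$, which again uses primitivity, and Theorem 3.6 sets $\sigma=\eta^{q}|_E$). Your observation that primitivity's real role is to pin down the generator is thus exactly right, and your proof subsumes the paper's.
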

\begin{proof}
Let $\eta$ be the $K$-automorphism of $F$ defined by $\eta(\lambda)=\lambda^x$ from Proposition \ref{prop:2.2}.
Then $\eta^i$ can be determined by $\eta^i(\lambda)=\lambda^{x^i}$ for any positive integer $i$.
Hence, $\eta^i=id$ if and only if $x^i\equiv 1\ (\text{mod } p(x))$.
Since $p(x)$ is primitive, the order of $\eta$ is $q^2-1$.
\end{proof}

From Lemma \ref{lem:3.1}, there exists a unique subgroup $G$ of $\Gal(F/K)$ with order $q-1$. Such a unique subgroup $G$ can be determined explicitly as follows.

\begin{prop}\label{prop:3.2}
Let $\eta$ be the generator of $\Gal(F/K)$ determined by $\eta(\Gl)=\Gl^x$ and let $\tau=\eta^{q+1}$ be an automorphism of $F$. Then the unique subgroup $G$ of  $\Gal(F/K)$ with order $q-1$ is the cyclic group generated by $\tau$, i.e., $$G=\langle \tau \rangle=\{\tau_c\in \Gal(F/K): \tau_c(\lambda)=c\lambda \text{ for } c\in \F_q^*\}.$$
\end{prop}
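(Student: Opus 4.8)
The plan is to identify the subgroup $G$ of order $q-1$ explicitly by exhibiting a concrete element of that order and showing it generates the claimed set. Since $\Gal(F/K)$ is cyclic of order $q^2-1$ generated by $\eta$ (Lemma~\ref{lem:3.1}), the unique subgroup of order $q-1$ is necessarily $\langle \eta^{q+1}\rangle$, because in a cyclic group of order $n=q^2-1=(q-1)(q+1)$ there is exactly one subgroup of each order dividing $n$, and the subgroup of order $q-1$ is generated by $\eta^{(q^2-1)/(q-1)}=\eta^{q+1}$. This gives the uniqueness and the cyclic-generation claim for free from Lemma~\ref{lem:3.1} and elementary cyclic-group theory; the genuine content to verify is the explicit description $G=\{\tau_c:\tau_c(\lambda)=c\lambda,\ c\in\F_q^*\}$.

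First I would recall from Proposition~\ref{prop:2.2}(v) the isomorphism $\Gal(F/K)\cong(\F_q[x]/(p(x)))^*$ sending $\sigma_f$ (with $\sigma_f(\lambda)=\lambda^f$) to $\overline{f}$. Under this isomorphism $\eta$ corresponds to the class $\overline{x}$, which is a generator of the cyclic group $(\F_q[x]/(p(x)))^*\cong\F_{q^2}^*$ precisely because $p(x)$ is primitive. Then $\tau=\eta^{q+1}$ corresponds to $\overline{x}^{\,q+1}=\overline{x^{q+1}}$. The key observation is that the subgroup of order $q-1$ in $\F_{q^2}^*$ is exactly the image of the norm map to $\F_q^*$, equivalently the set of elements whose $(q-1)$-st power is $1$ — and crucially, the unique subgroup of order $q-1$ in $(\F_q[x]/(p(x)))^*$ is the subfield-copy $\F_q^*=\{\overline{c}:c\in\F_q^*\}$ of nonzero constants, since $\F_q^*$ is the unique subgroup of order $q-1$ inside $\F_{q^2}^*$.

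So the heart of the proof is matching the two descriptions: I must show that the automorphism $\sigma_c$ associated to a nonzero constant $\overline{c}\in(\F_q[x]/(p(x)))^*$ acts by $\sigma_c(\lambda)=c\lambda$, i.e. $\lambda^c=c\lambda$ for $c\in\F_q^*$. This follows from the $R$-module structure: for a constant polynomial $c\in\F_q\subset R$, the Carlitz action is $\lambda^c=c(\varphi)(\lambda)=c\cdot\lambda$, because $\varphi$ is $\F_q$-linear and the image of the scalar $c$ under $R\to\mathrm{End}_{\F_q}(K^{ac})$ is just multiplication by $c$ (the constant polynomial $c$ maps to the endomorphism $c(\varphi)=c\cdot\mathrm{id}$). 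Hence $\tau_c:=\sigma_c$ satisfies $\tau_c(\lambda)=c\lambda$, these maps are distinct for distinct $c\in\F_q^*$, they form a subgroup of order $q-1$ (isomorphic to $\F_q^*$ via $c\mapsto\tau_c$), and by uniqueness this subgroup must coincide with $\langle\tau\rangle$.

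The main obstacle is the last identification step — confirming that $\langle\eta^{q+1}\rangle$, defined abstractly as the order-$(q-1)$ subgroup, really equals the set of constant-multiplication automorphisms $\{\tau_c\}$. I expect to handle this cleanly through the isomorphism $\Gal(F/K)\cong\F_{q^2}^*$: under it $\langle\eta^{q+1}\rangle$ maps to the unique order-$(q-1)$ subgroup of $\F_{q^2}^*$, which is $\F_q^*$, and $\{\tau_c\}$ maps to $\{\overline{c}:c\in\F_q^*\}=\F_q^*$ as well, so the two subgroups agree. An alternative, more self-contained route avoiding the identification of the unique subgroup is to verify directly that each $\tau_c$ has order dividing $q-1$ and that $c\mapsto\tau_c$ is an injective homomorphism $\F_q^*\hookrightarrow\Gal(F/K)$, so its image is a subgroup of order exactly $q-1$; uniqueness of such a subgroup in the cyclic group $\Gal(F/K)$ then forces it to equal $\langle\tau\rangle$. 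I would present whichever is shorter, but either way the computation $\lambda^c=c\lambda$ for constants $c$ is the one concrete fact that must be checked.
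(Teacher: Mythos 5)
Your proposal is correct, but it takes a genuinely different route from the paper. The paper's proof is computational: it first derives the congruences $x^q\equiv -x-a$ and $x^{q+1}\equiv b \pmod{x^2+ax+b}$ (from $p(x)\mid x^q+x+a$), concludes that $\tau(\Gl)=\Gl^{x^{q+1}}=\Gl^b=b\Gl$, and then invokes the fact that the constant term $b$ of a primitive quadratic polynomial is a generator of $\F_q^*$ (\cite[Lemma~3.17]{LN83}); thus $\tau=\tau_b$ has order exactly $q-1$ and its powers exhaust $\{\tau_c: c\in\F_q^*\}$. You instead argue purely group-theoretically: $\langle\eta^{q+1}\rangle$ is the unique order-$(q-1)$ subgroup of the cyclic group $\Gal(F/K)$, the constant-multiplication automorphisms $\{\tau_c\}$ form a subgroup of order $q-1$ (via the Carlitz action on constants, $\lambda^c=c\lambda$, and Proposition~\ref{prop:2.2}(v)), and uniqueness forces the two to coincide. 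Both arguments are sound and both rely on primitivity of $p(x)$ through Lemma~\ref{lem:3.1}; your verification of $\lambda^c=c\lambda$ is exactly the right concrete fact to check, and your route avoids any polynomial arithmetic. What your approach does \emph{not} produce, however, is the explicit identity $\tau(\Gl)=b\Gl$ and the congruences $x^q\equiv -x-a$, $x^{q+1}\equiv b$, which the paper reuses downstream: Proposition~\ref{prop:3.4} uses $\tau(\Gl)=b\Gl$ to show $\tau$ fixes $\Gl^{q-1}$, and Theorem~\ref{thm:3.6} cites ``the proof of Proposition~\ref{prop:3.2}'' for these congruences when computing $\sigma(u)=-b/(u+a)$. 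So the paper's computation, though less conceptual, earns its keep later, while yours is the cleaner proof of the proposition taken in isolation.
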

\begin{proof}
From the Eisenstein's irreducibility criterion \cite[Proposition 3.1.15]{St09}, we have $x^2+ax+b$ divides $x^q+x+a$, i.e., 
$$x^q\equiv -x-a \ (\text{mod } x^2+ax+b).$$
Let $\tau=\eta^{q+1}$. It is easy to verify that $\tau(\Gl)=\eta^{q+1}(\lambda)=\lambda^{x^{q+1}}=\lambda^{-x(x+a)}=\Gl^b=b\Gl$, since $x^{q+1}=x\cdot x^q\equiv -x(x+a)\equiv b\ (\text{mod } x^2+ax+b).$
From \cite[Lemma 3.17]{LN83}, the constant $b$ is a generator of $\F_q^*$.
Hence, the order of $\tau$ is $q-1$ and $G=\langle \tau \rangle=\{\tau_c\in \Gal(F/K): \tau_c(\lambda)=c\lambda \text{ for } c\in \F_q^*\}$ from Proposition \ref{prop:2.2}.
\end{proof}

Let $E$ be the fixed subfield of $F$ with respect to $G$, that is, $E=F^G=\{z\in F: \sigma(z)=z \text{ for any } \sigma\in G\}.$
From Galois theory, $E/K$ is an abelian extension of degree $q+1$. In particular, the fixed subfield $E$ can be characterized explicitly as follows.

\begin{prop}\label{prop:3.3}
Let $u=\Gl^x/x=\Gl^{q-1}+x$. Then we have $F=K(\Gl)=\F_q(u,\Gl)$ and $F/\F_q(u)$ is a Kummer extension determined by
$$\Gl^{q-1}=-\frac{u^2+au+b}{u^q-u}.$$
Moreover, $u$ and $x$ satisfy the following equation
$$\frac{u^{q+1}+au+b}{u^q-u}=x.$$
\end{prop}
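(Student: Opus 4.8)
The plan is to reduce the entire statement to the defining equation of $\lambda$ recorded in Proposition \ref{prop:2.2}(i), together with the Carlitz action, and then to carry out a single algebraic simplification. First I would use the definition of the $R$-action to write $\lambda^x = \varphi(\lambda) = \lambda^q + x\lambda = \lambda(\lambda^{q-1}+x)$, which makes the definition $u = \lambda^{q-1}+x$ transparent. This identity already yields the first claim: since $x = u - \lambda^{q-1} \in \F_q(u,\lambda)$, we get $K = \F_q(x) \subseteq \F_q(u,\lambda)$ and hence $F = K(\lambda) \subseteq \F_q(u,\lambda)$; conversely $u \in \F_q(x,\lambda) = F$, so $\F_q(u,\lambda) = F$.

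For the Kummer relation I would set $w := \lambda^{q-1}$, so that $\lambda^{q^2-1} = w^{q+1}$ and $x = u - w$, and substitute into the relation
$$\lambda^{q^2-1} + (x^q + x + a)\lambda^{q-1} + (x^2 + ax + b) = 0$$
of Proposition \ref{prop:2.2}(i), using $x^q = u^q - w^q$ by additivity of Frobenius. The key observation is that upon expansion the terms $w^{q+1}$, $w^2$, and $aw$ all cancel, collapsing the relation to the linear-in-$w$ equation
$$(u^q - u)\,w + (u^2 + au + b) = 0.$$
Solving for $w$ gives exactly $\lambda^{q-1} = -\dfrac{u^2+au+b}{u^q-u}$, the asserted Kummer equation. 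Substituting this back into $x = u - w$ and combining fractions then produces
$$x = u + \frac{u^2+au+b}{u^q-u} = \frac{u^{q+1}+au+b}{u^q-u},$$
which is the final displayed identity. This cancellation is essentially the whole computational content, and the only step demanding care is verifying that the three groups of terms indeed vanish.

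Finally I would confirm that $F/\F_q(u)$ genuinely satisfies the hypotheses of Proposition \ref{prop:2.3}. The relation exhibits $\lambda$ as a root of $T^{q-1}-w$ with $w \in \F_q(u)$, and $\F_q$ contains all $(q-1)$-th roots of unity, so the only thing to check is that the extension has full degree $q-1$ (equivalently, that $w$ is not a proper power). For this I would use the last identity: it writes $x$ as a rational function of $u$ whose numerator $u^{q+1}+au+b$ and denominator $u^q-u = \prod_{\alpha\in\F_q}(u-\alpha)$ are coprime, since a common root $\alpha \in \F_q$ would satisfy $\alpha^{q+1}+a\alpha+b = \alpha^2+a\alpha+b = p(\alpha) = 0$, impossible as $p(x)$ is irreducible over $\F_q$. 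Hence $[\F_q(u):K] = q+1$ exactly, and combined with $[F:K] = q^2-1$ this forces $[F:\F_q(u)] = q-1$; thus $T^{q-1}-w$ is the minimal polynomial of $\lambda$ over $\F_q(u)$ and Proposition \ref{prop:2.3} applies. As a byproduct, $\tau_c(u) = (c\lambda)^{q-1}+x = u$ for every $c\in\F_q^*$ shows $u \in E = F^G$, and the matching index $q+1$ identifies $\F_q(u) = E$. The main obstacle is purely the bookkeeping in the cancellation step; the remaining claims follow from elementary degree counting.
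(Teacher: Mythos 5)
Your proposal is correct and takes essentially the same approach as the paper: both substitute $x=u-\lambda^{q-1}$ into the defining relation $\lambda^{q^2-1}+(x^q+x+a)\lambda^{q-1}+x^2+ax+b=0$ of Proposition \ref{prop:2.2}(i), observe the cancellation of the $\lambda^{q^2-1}$ and quadratic terms, and solve the resulting equation that is linear in $\lambda^{q-1}$. Your additional verification of the full degree $[F:\F_q(u)]=q-1$ (via coprimality of $u^{q+1}+au+b$ and $u^q-u$) and the byproduct identification $E=\F_q(u)$ are sound but go beyond the paper's proof of this proposition, which defers those points to Proposition \ref{prop:3.4}.
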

\begin{proof}
From Proposition \ref{prop:2.2} or \cite[Section 4]{MXY16}, the cyclotomic function field $F=K(\Lambda_{p(x)})$ is given by $F=K(\Gl)$ with
$\Gl^{q^2-1} +(x^q+x+a)\Gl^{q-1} +x^2+ax+b = 0.$
Substituting $x$ with $u-\Gl^{q-1}$, one has
$ \Gl^{q^2-1}+[(u-\Gl^{q-1})^q+(u-\Gl^{q-1})+a]\Gl^{q-1}+(u-\Gl^{q-1})^2+a(u-\Gl^{q-1})+b=0.$
It follows that $$\Gl^{q-1}=-\frac{u^2+au+b}{u^q-u}.$$  Since $x=u-\Gl^{q-1}$, we have $F=K(\Gl)=\F_q(x,\Gl)=\F_q(u,\Gl)$ and
$$x=u-\Gl^{q-1}=u+\frac{u^2+au+b}{u^q-u}=\frac{u^{q+1}+au+b}{u^q-u}. $$
\end{proof}

\begin{prop}\label{prop:3.4}
Let $\eta$ be the generator of $\Gal(F/K)$ determined by $\eta(\Gl)=\Gl^x$, $\tau=\eta^{q+1}$ and $G=\langle \tau \rangle$. Then the fixed subfield of $F$ with respect to $G$ is $E=F^G=\F_q(u)$.
\end{prop}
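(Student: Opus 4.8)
The plan is to prove the two field-theoretic inclusions separately by a dimension count. First I would verify that $\F_q(u)\subseteq E=F^G$, which only requires checking that every element of $G$ fixes $u$. By Proposition~\ref{prop:3.2} a typical generator $\tau_c\in G$ acts by $\tau_c(\Gl)=c\Gl$ with $c\in\F_q^*$, so that $\tau_c(\Gl^{q-1})=c^{q-1}\Gl^{q-1}=\Gl^{q-1}$ because $c^{q-1}=1$. Since $x\in K$ is fixed by every element of $\Gal(F/K)$, and $u=\Gl^{q-1}+x$ by Proposition~\ref{prop:3.3}, it follows that $\tau_c(u)=u$ for all $c\in\F_q^*$; hence $u\in F^G$ and $\F_q(u)\subseteq E$.

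Next I would pin down the degree $[\F_q(u):K]$. Proposition~\ref{prop:3.3} supplies the relation
$$x=\frac{u^{q+1}+au+b}{u^q-u},$$
which exhibits $x$ as a rational function of $u$. By the standard formula for the degree of a subfield of a rational function field (see \cite{St09}), the degree $[\F_q(u):\F_q(x)]$ equals $\max\{\deg(u^{q+1}+au+b),\,\deg(u^q-u)\}=q+1$, \emph{provided} the numerator and denominator are coprime in $\F_q[u]$.

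The main point to check is therefore this coprimality, and it is the only nonroutine ingredient. Since $u^q-u=\prod_{\Ga\in\F_q}(u-\Ga)$ splits completely over $\F_q$, it suffices to show that no $\Ga\in\F_q$ is a root of the numerator. For such an $\Ga$ one has $\Ga^q=\Ga$, whence $\Ga^{q+1}+a\Ga+b=\Ga^2+a\Ga+b=p(\Ga)$; as $p(x)$ is irreducible over $\F_q$, this value is nonzero. Thus the two polynomials share no common root, the fraction is already in lowest terms, and $[\F_q(u):K]=q+1$. This is exactly the place where the (root-freeness, hence irreducibility) of $p(x)$ is used.

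Finally I would combine the two steps. By Galois theory $[E:K]=[F:K]/|G|=(q^2-1)/(q-1)=q+1$, which matches $[\F_q(u):K]$. Together with the inclusion $\F_q(u)\subseteq E$ established in the first step, this equality of degrees forces $E=\F_q(u)$, as claimed.
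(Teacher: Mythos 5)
Your proof is correct, and while it opens the same way as the paper's, it closes the degree count by a genuinely different route. Both arguments begin identically: every $\tau_c\in G$ satisfies $\tau_c(\Gl^{q-1})=c^{q-1}\Gl^{q-1}=\Gl^{q-1}$, and since $x$ is fixed, $u=\Gl^{q-1}+x\in F^G$, giving $\F_q(u)\subseteq E$. The difference is in how the equality is then forced. The paper works \emph{from above}, inside $F$: it notes $[F:F^G]=|G|=q-1$ by Galois theory, and then uses the Kummer presentation $\Gl^{q-1}=-(u^2+au+b)/(u^q-u)$ from Proposition~\ref{prop:3.3} together with Proposition~\ref{prop:2.3} to get $[F:\F_q(u)]=q-1$, so the two index-$(q-1)$ subfields coincide. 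You instead work \emph{from below}, over $K$: you compute $[\F_q(u):K]=q+1$ directly from the relation $x=(u^{q+1}+au+b)/(u^q-u)$ via the standard degree formula for rational function fields, checking coprimality of numerator and denominator by evaluating the numerator at the roots $\Ga\in\F_q$ of $u^q-u$ and using $\Ga^{q+1}+a\Ga+b=p(\Ga)\neq 0$; matching this against $[E:K]=(q^2-1)/(q-1)=q+1$ then pins down $E=\F_q(u)$. Your route is more elementary and self-contained --- it bypasses the Kummer machinery of Proposition~\ref{prop:2.3} entirely (whose hypothesis, that the element is not a proper $d$-th power, the paper invokes without explicit verification) and it isolates exactly where irreducibility of $p(x)$ enters. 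The paper's route has the advantage of reusing machinery that is needed elsewhere in the article anyway, since the Kummer presentation of $F/\F_q(u)$ is also the basis for Lemma~\ref{lem:3.5} and the correlation bounds later on.
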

\begin{proof}
It is easy to verify that $\tau(\Gl^{q-1})=(\tau(\Gl))^{q-1}=(b\Gl)^{q-1}=\Gl^{q-1}$. From Proposition \ref{prop:3.3}, we have $\F_q(u)=\F_q(u,x)=\F_q(x,x+\Gl^{q-1})=\F_q(x,\Gl^{q-1})\subseteq F^{\langle \tau\rangle}=F^G.$
From Galois theory, the degree of extension $F/F^G$ is $[F:F^G]=|G|=q-1$.
From the proof of Proposition  \ref{prop:3.3}, $F/\F_q(u)$ is a Kummer extension given by
$$\Gl^{q-1}=-\frac{u^2+au+b}{u^q-u}.$$
From Proposition \ref{prop:2.3}, we have $[F:\F_q(u)]=q-1$.
Hence, we have $E=F^G=\F_q(u)$.
\end{proof}

Let $Q$ be a place of $E$ lying over $p(x)$.  From Proposition \ref{prop:2.2}, $Q|p(x)$ is totally ramified in $E/K$ with ramification index $e(Q|p(x))=q+1$ and different exponent $d(Q|p(x))=q$.
From the Hurwitz genus formula \cite[Theorem 3.4.13]{St09}, we obtain
$$-2=2g(E)-2\ge (q+1)[2g(K)-2]+2q.$$
It follows that all other places of $E$ except $Q$ are unramified in $E/K$.
Moreover, the place $Q$ can be characterized similarly as \cite[Proposition 4.2]{JMX22}.

\begin{lemma}\label{lem:3.5}
Let $Q$ be the unique place of $E$ lying over $p(x)$. Then the place $Q$ corresponds to the monic quadratic irreducible polynomial $p(u)=u^2+au+b$.
\end{lemma}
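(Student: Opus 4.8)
The plan is to exhibit the place of $E=\F_q(u)$ corresponding to $p(u)=u^2+au+b$ and verify directly that it lies over the place $p(x)$ of $K$; uniqueness of $Q$ then forces the two to coincide. Since $E=\F_q(u)$ is rational, its places are in bijection with the monic irreducible polynomials in $u$ together with the pole of $u$. By the discussion preceding the lemma, $Q$ is the unique place over $p(x)$ and is totally ramified with $e(Q|p(x))=q+1$; as $[E:K]=q+1$, the fundamental identity forces the relative degree $f(Q|p(x))=1$, so $\deg(Q)=2$ and $Q$ must correspond to some monic irreducible quadratic in $u$. Thus it suffices to show that the place $P$ attached to the polynomial $p(u)=u^2+au+b$ (which is irreducible over $\F_q$, being the same polynomial as the irreducible $p(x)$) satisfies $P\mid p(x)$.

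The key computation is a simplification of the defining relation. By Proposition \ref{prop:3.3}, $x=\dfrac{u^{q+1}+au+b}{u^q-u}$ in $E$, and subtracting $u$ gives
\[
x-u=\frac{u^{q+1}+au+b-u(u^q-u)}{u^q-u}=\frac{u^2+au+b}{u^q-u}=\frac{p(u)}{u^q-u}.
\]
So $x=u+\dfrac{p(u)}{u^q-u}$, which already isolates the factor $p(u)$ and is the heart of the argument.

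The last step is to read off the local behaviour at $P$. One has $\nu_P(p(u))=1$ since $p(u)$ is the irreducible polynomial defining $P$. Moreover $u^q-u$ is a unit at $P$: its residue is $\bar u^{\,q}-\bar u$, where $\bar u$ is the residue of $u$ at $P$, a root of the irreducible polynomial $p$; because $p$ is irreducible over $\F_q$ we have $\bar u\notin\F_q$, hence $\bar u^{\,q}\neq\bar u$ and $\nu_P(u^q-u)=0$. Consequently $\nu_P(x-u)=1>0$, so $x$ is regular at $P$ with residue $x(P)=u(P)=\bar u$. Evaluating, $p(x)(P)=\bar u^2+a\bar u+b=0$, which shows $\nu_P(p(x))>0$, i.e.\ $P$ lies over the place $p(x)$ of $K$. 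Since $Q$ is the unique such place by Proposition \ref{prop:2.2}(ii), we conclude $Q=P$, the place corresponding to $p(u)=u^2+au+b$. The only genuine step is spotting the telescoping identity $x-u=p(u)/(u^q-u)$; once that is in hand, checking that $u^q-u$ does not vanish at $P$ and evaluating the residue are routine, so I do not anticipate a serious obstacle.
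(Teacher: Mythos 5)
Your proof is correct, but it takes a genuinely different route from the paper's. The paper works in the big extension $F/E$: it notes that $Q$ must have degree $2$ and be totally ramified in $F/E$, then uses the Kummer description $\Gl^{q-1}=-(u^2+au+b)/(u^q-u)$ together with Proposition \ref{prop:2.3} to list all totally ramified places of $E=\F_q(u)$ in $F/E$ (the zero of $u^2+au+b$, the pole of $u$, and the zeros of $u-\Ga$ for $\Ga\in\F_q$), and concludes by observing that the zero of $u^2+au+b$ is the only one of degree $2$. You instead stay entirely inside $E/K$ and verify directly that the candidate place lies over $p(x)$: your telescoping identity
\[
x-u=\frac{u^{q+1}+au+b-u(u^q-u)}{u^q-u}=\frac{u^2+au+b}{u^q-u}
\]
is correct, $u^q-u$ is indeed a unit at the place $P$ of $p(u)$ (its residue $\bar u^{\,q}-\bar u$ is nonzero because $\bar u\notin\F_q$), so $x(P)=u(P)=\bar u$ and $p(x)$ vanishes at $P$; uniqueness of $Q$ over $p(x)$ (which follows from total ramification, as in the discussion preceding the lemma) finishes the argument. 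Your approach is more elementary---it needs only Proposition \ref{prop:3.3} and basic valuation/residue computations, with no appeal to Kummer ramification theory---while the paper's approach yields as a by-product the full list of ramified places of $F/E$, information of the same kind that is reused later (e.g., in the genus estimates of Propositions \ref{prop:3.15} and \ref{prop:3.16}). Either argument is a complete and valid proof of the lemma.
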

\begin{proof}
From Proposition \ref{prop:2.2}, $p(x)$ is totally ramified in $F/\F_q(x)$.
Hence, $Q$ is totally ramified in  $F/\F_q(u)$ with $\deg(Q)=2$.
From Proposition \ref{prop:2.3} and Proposition \ref{prop:3.3}, the zeros of $u^2+au+b$, $1/u$ and $u-\alpha$ with $\alpha\in \F_q$ are all totally ramified in the extension $F/\F_q(u)$. However, the zero of $u^2+au+b$ is the unique totally ramified place of $\F_q(u)$ in the extension $F/\F_q(u)$ with degree $2$.
Hence, the place $Q$ corresponds to the monic quadratic polynomial $p(u)=u^2+au+b$.
\end{proof}

\begin{theorem}\label{thm:3.6}
Let $\eta$ be the generator of $\Gal(F/K)$ determined by $\eta(\Gl)=\Gl^x$ and let $\sigma=\eta^{q}|_E$.
Then $\sigma$ is a $K$-automorphism of $E$ with order $q+1$ and the automorphism $\s$ can be determined explicitly by
$$\sigma(u)=\frac{-b}{u+a}.$$
In particular, the Galois group of $E/K$ is a cyclic group given by $\Gal(E/K)=\langle \sigma \rangle.$
\end{theorem}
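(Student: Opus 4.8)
The plan is to split the statement into two independent parts: the purely group-theoretic claims (that $\sigma$ is a $K$-automorphism of order $q+1$ generating $\Gal(E/K)$) and the explicit computation of $\sigma(u)$ via the Carlitz action.

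First I would settle the group theory. By Lemma~\ref{lem:3.1} the group $\Gal(F/K)=\langle\eta\rangle$ is cyclic of order $q^2-1$, so every subgroup is normal; in particular $G=\langle\tau\rangle=\langle\eta^{q+1}\rangle$ is normal, hence $E=F^G$ is Galois over $K$ and restriction gives a surjection $\Gal(F/K)\to\Gal(E/K)$ with kernel $\Gal(F/E)=G$. Thus $\Gal(E/K)\cong\Gal(F/K)/G$ is cyclic of order $(q^2-1)/(q-1)=q+1$, generated by the image $\eta|_E$ of $\eta$. The clean observation here is that $\sigma=\eta^q|_E=(\eta|_E)^{-1}$, because $\eta^{q+1}=\tau\in G$ restricts to the identity on $E=F^G$. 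Consequently $\sigma$ is indeed a $K$-automorphism of $E$, it has the same order as $\eta|_E$, namely $q+1$, and $\Gal(E/K)=\langle\eta|_E\rangle=\langle\sigma\rangle$.

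Next I would compute $\sigma(u)=\eta^q(u)$. From $\eta(\Gl)=\Gl^x$ we get $\eta^q(\Gl)=\Gl^{x^q}$, and since $p(x)\mid x^q+x+a$ (as in the proof of Proposition~\ref{prop:3.2}) we have $x^q\equiv -(x+a)\pmod{p(x)}$, so $\eta^q(\Gl)=\Gl^{-x-a}$. Expanding the module action $z^{f}=f(\varphi)(z)$ with $\varphi(z)=z^q+xz$ gives
$$\eta^q(\Gl)=\Gl^{-x-a}=-\big(\Gl^q+(x+a)\Gl\big)=-\Gl\big(\Gl^{q-1}+x+a\big)=-\Gl(u+a),$$
where the last step uses $u=\Gl^{q-1}+x$ from Proposition~\ref{prop:3.3}. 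Since $\eta^q$ fixes $x$ and $q$ is odd, so that $(-1)^{q-1}=1$, this yields
$$\sigma(u)=\eta^q(\Gl^{q-1})+x=\big(\eta^q(\Gl)\big)^{q-1}+x=\Gl^{q-1}(u+a)^{q-1}+x.$$

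Finally I would simplify using $(u+a)^{q-1}=(u^q+a)/(u+a)$ (valid because $a^q=a$) together with the explicit expressions $\Gl^{q-1}=-(u^2+au+b)/(u^q-u)$ and $x=(u^{q+1}+au+b)/(u^q-u)$ from Proposition~\ref{prop:3.3}. Clearing the common denominator $(u^q-u)(u+a)$ reduces the problem to the polynomial identity
$$-(u^2+au+b)(u^q+a)+(u^{q+1}+au+b)(u+a)=-b(u^q-u),$$
after which the factor $u^q-u$ cancels and leaves $\sigma(u)=-b/(u+a)$. The main obstacle is precisely verifying this numerator identity: it is a routine but slightly delicate expansion in which the $u^{q+2}$, $u^{q+1}$, $u^2$, and linear terms all cancel pairwise, so only $-b u^q+bu$ survives. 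The cleanest presentation is to record it as the single displayed polynomial equality above and check it term by term, keeping in mind throughout that $a,b\in\F_q$.
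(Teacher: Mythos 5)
Your proof is correct, but the computational half takes a genuinely different route from the paper. For the group theory, the paper argues that $\eta|_E$ has order $q+1$ in $\Gal(E/K)\cong\Gal(F/K)/\langle\eta^{q+1}\rangle$ and then invokes $\gcd(q,q+1)=1$ to conclude $\sigma=\eta^q|_E$ also has order $q+1$; your observation that $\sigma=(\eta|_E)^{-1}$ (because $\eta^{q+1}=\tau$ acts trivially on $E=F^G$) reaches the same conclusion a bit more cleanly, and both arguments are sound. For the explicit formula, the paper exploits the representation $u=\Gl^x/\Gl$ rather than $u=\Gl^{q-1}+x$: it writes
$$\sigma(u)=\frac{\Gl^{x^{q+1}}}{\Gl^{x^q}}=\frac{\Gl^b}{\Gl^{-x-a}}=\frac{b\Gl}{-\Gl^x-a\Gl}=\frac{-b}{u+a},$$
using \emph{both} congruences $x^q\equiv -x-a$ and $x^{q+1}\equiv b \pmod{p(x)}$ together with the $\F_q$-linearity of the Carlitz action; the desired formula then drops out as a ratio with no algebraic simplification needed. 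You instead use only $x^q\equiv-x-a$, compute $\eta^q(\Gl)=-\Gl(u+a)$, and then pay for this economy by invoking the Kummer relations $\Gl^{q-1}=-(u^2+au+b)/(u^q-u)$ and $x=(u^{q+1}+au+b)/(u^q-u)$ from Proposition~\ref{prop:3.3} and verifying the polynomial identity
$$-(u^2+au+b)(u^q+a)+(u^{q+1}+au+b)(u+a)=-b(u^q-u),$$
which does check out (all terms cancel except $-bu^q+bu$). Your route is heavier but has the virtue of making the dependence on Proposition~\ref{prop:3.3} explicit and of double-checking consistency between the two descriptions of $E$; the paper's route is shorter and shows directly that $\sigma(u)\in\F_q(u)$, which is how it justifies that $\sigma$ restricts to an automorphism of $E$ (you justify this instead via normality of subgroups of the cyclic group $\Gal(F/K)$, which is equally valid).
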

\begin{proof}
From Galois theory, we have $\Gal(E/K)\cong \Gal(F/K)/\Gal(F/E)= \langle \eta \rangle/ \langle \eta^{q+1} \rangle$.
Thus, the order of $\eta$ is $q+1$ in the group $\Gal(E/K)$. Since $\gcd(q,q+1)=1$, the order of $\sigma=\eta^{q}|_E$ is $q+1$.
It is easy to verify that $\sigma(u)=\s(\Gl^x/\Gl)=\Gl^{x^{q+1}}/\Gl^{x^{q}}$.
From the proof of Proposition \ref{prop:3.2}, we have $x^q\equiv -x-a\ (\text{mod } x^2+ax+b)$ and $x^{q+1}=x\cdot x^q\equiv x(-x-a)\equiv b\  (\text{mod } x^2+ax+b)$.
It follows that $\Gl^{x^q}=\Gl^{-x-a}$, $\Gl^{x^{q+1}}=\Gl^b$ and
\begin{align*}
\sigma(u)&=\frac{\Gl^{x^{q+1}}}{\Gl^{x^{q}}}=\frac{\Gl^{b}}{\Gl^{-x-a}}=\frac{b\Gl}{-\Gl^x-a\Gl}=\frac{-b}{u+a}\in E.
\end{align*}
Hence, $\sigma$ is a $K$-automorphism of $E$ and $\Gal(E/K)=\langle \sigma \rangle.$
\end{proof}

\subsection{An equivalence relation on $\mL(Q)\setminus \{0\}$}\label{subsec:3.2}
Let $Q$ be the place of $E$ with degree two corresponding to the irreducible polynomial $p(u)=u^2+au+b$ from Lemma \ref{lem:3.5}.
Since the degree of $Q$ is $\deg(Q)=2\ge 2g(E)-1=-1$, the dimension of Riemann-Roch space $\mL(Q)$ is $\ell(Q)=\deg(Q)-g(E)+1=3$ from the Riemann-Roch theorem. It is clear that $\F_q=\mL(0)\subseteq \mL(Q)$.
Furthermore, $\mL(Q)$ can be determined explicitly as
$$\mL(Q)=\left\{\frac{c_0+c_1u+c_2u^2}{u^2+au+b}\in E: c_i\in \F_q \text{ for } 0\le i\le 2\right\}.$$
Now we can define a relation $\sim$ on $\mL(Q)\setminus \{0\}$ as follows: for any $z_1,z_2\in \mL(Q)\setminus \{0\}$,
\[z_1\sim z_2 \Leftrightarrow \exists \tau\in \Gal(E/K) \text{ such that } z_1\cdot \tau(z_2)\in \F_q\cdot E^2.\]
Here $E^2$ stands for the set $\{z^2: z\in E\}$.

\begin{prop}\label{prop:3.7}
The relation $\sim$ defined as above is an equivalence relation on the set $\mL(Q)\setminus \{0\}$.
\end{prop}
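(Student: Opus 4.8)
The plan is to verify the three defining properties of an equivalence relation—reflexivity, symmetry, and transitivity—directly from the definition, exploiting two structural facts. First, $\Gal(E/K)$ is a group, so it is closed under composition and inversion. Second, every $\tau\in\Gal(E/K)$ fixes $K\supseteq\F_q$ pointwise, hence $\tau(w^2)=(\tau(w))^2$ and $\tau(c)=c$ for $c\in\F_q$, so $\tau$ maps $E^2$ into $E^2$ and $\F_q\cdot E^2$ into itself. Throughout I will use that each $z_i\in\mL(Q)\setminus\{0\}$ is nonzero, so any product $z_i\cdot\tau(z_j)$ is nonzero and the scalar witnessing membership in $\F_q\cdot E^2$ actually lies in $\F_q^*$; this allows me to divide freely.

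Reflexivity is immediate: taking $\tau=\mathrm{id}$ gives $z\cdot\mathrm{id}(z)=z^2=1\cdot z^2\in\F_q\cdot E^2$, so $z\sim z$. For symmetry, suppose $z_1\sim z_2$ is witnessed by some $\tau$, say $z_1\cdot\tau(z_2)=c\,w^2$ with $c\in\F_q^*$ and $w\in E$. I would apply the automorphism $\tau^{-1}\in\Gal(E/K)$ to both sides. Since $\tau^{-1}$ fixes $c\in\F_q$ and sends $w$ to an element of $E$, this yields $\tau^{-1}(z_1)\cdot z_2=c\,(\tau^{-1}(w))^2\in\F_q\cdot E^2$, which is exactly the statement that $z_2\sim z_1$, witnessed by $\tau^{-1}$.

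Transitivity is the step requiring the one genuinely clever manipulation, and I expect it to be the main obstacle. Suppose $z_1\sim z_2$ and $z_2\sim z_3$ are witnessed by $\tau_1,\tau_2\in\Gal(E/K)$:
\[
z_1\cdot\tau_1(z_2)=c_1 w_1^2,\qquad z_2\cdot\tau_2(z_3)=c_2 w_2^2.
\]
The idea is to apply $\tau_1$ to the second identity, obtaining $\tau_1(z_2)\cdot(\tau_1\tau_2)(z_3)=c_2(\tau_1(w_2))^2$, and then multiply this by the first identity. The left-hand side then contains the factor $(\tau_1(z_2))^2$, and after cancelling it against the single factors coming from each identity I obtain
\[
z_1\cdot(\tau_1\tau_2)(z_3)=c_1 c_2\left(\frac{w_1\,\tau_1(w_2)}{\tau_1(z_2)}\right)^{2}\in\F_q\cdot E^2.
\]
Since $\tau_1\tau_2\in\Gal(E/K)$, this witnesses $z_1\sim z_3$.

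The only points needing care in the transitivity argument are that $\tau_1(z_2)\neq 0$, so that the division is legitimate—which holds because $z_2\neq 0$ and $\tau_1$ is an automorphism—and that the composite $\tau_1\tau_2$ again lies in the Galois group, which is automatic. With these three verifications in place, $\sim$ is reflexive, symmetric, and transitive, hence an equivalence relation on $\mL(Q)\setminus\{0\}$.
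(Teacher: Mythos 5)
Your proof is correct and follows essentially the same route as the paper's: reflexivity via $\tau=\mathrm{id}$, symmetry by applying $\tau^{-1}$, and transitivity by applying $\tau_1$ to the second relation, multiplying by the first, and cancelling the square factor $(\tau_1(z_2))^2$. Your version is in fact slightly more careful than the paper's, since you make explicit the division by $\tau_1(z_2)\neq 0$ and the fact that the scalar witness lies in $\F_q^*$, both of which the paper leaves implicit.
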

\begin{proof}
It is clear that three axioms of an equivalence relation are satisfied:
\begin{itemize}
\item[(1)] {\bf Reflexivity:} For any $z\in \mL(Q)\setminus \{0\}$, $z\cdot id(z)=z\cdot z=z^2\in \F_q\cdot E^2$. Hence, we have $z\sim z$.
\item[(2)]  {\bf Symmetry:}  If $z_1\sim z_2$, then there exists $\tau\in \Gal(E/K)$ such that $z_1\cdot \tau(z_2)=\alpha\cdot v^2\in \F_q\cdot E^2$ for some $\alpha\in \F_q$ and $v\in E$.
It follows that $z_2\cdot \tau^{-1}(z_1)=\tau^{-1}(\alpha\cdot v^2)=\tau^{-1}(\alpha)\cdot \tau^{-1}( v^2)=\Ga\cdot (\tau^{-1}(v))^2\in \F_q\cdot E^2$. Hence, we have $z_2\sim z_1$.
\item[(3)] {\bf Transitivity:}  If $z_1\sim z_2$ and $z_2\sim z_3$, there exist $\tau_1$ and $\tau_2$ in $\Gal(E/K)$ such that $z_1\cdot \tau_1(z_2)=\alpha_1\cdot v_1^2\in \F_q\cdot E^2$ and $z_2\cdot \tau_2(z_3)=\alpha_2\cdot v_2^2\in \F_q\cdot E^2$ for some $\alpha_1,\alpha_2\in \F_q$.
It follows that $z_1\cdot \tau_1(z_2)\cdot \tau_1(z_2\cdot \tau_2(z_3))=z_1\cdot (\tau_1(z_2))^2\cdot \tau_1\tau_2(z_3)=\alpha_1\Ga_2\cdot v_1^2(\tau_1(v_2))^2\in \F_q\cdot E^2$, i.e., $z_1\cdot \tau_1\tau_2(z_3)\in \F_q\cdot E^2$. Hence, we have $z_1\sim z_3$.
 \end{itemize}
\end{proof}

For any element $z\in \mL(Q)\setminus \{0\}$, let $[z]$ denote the equivalence class $\{x\in V: x\sim z\}$ containing $z$.
In this subsection, we want to determine representative elements of equivalence classes of  $\mL(Q)\setminus \{0\}$ under the relation $\sim$. Since $\Ga \cdot z\in [z]$ for any $\Ga\in \F_q^*$, the relation $\sim$ induces an equivalence relation on the set $V:=(\mL(Q)\setminus \{0\})/\F_q^*$.
Hence, it will be sufficient to determine representative elements of equivalent classes of $V$.
Let $S_1$ be the set
$$S_1=\left\{\frac{u^2+cu+d}{u^2+au+b}\notin \F_q: u^2+cu+d \text{ is an irreducible polynomial in } \F_q[u]\right\},$$ let $S_2$ be the set
$$S_2=\left\{\frac{u-\Ga}{u^2+au+b}: \Ga\in \F_q\right\}\cup \left\{\frac{(u-\Ga)(u-\Gb)}{u^2+au+b}: \Ga\neq \Gb\in \F_q\right\},$$ and let $S_3$ be the set
$$S_3=\left\{\frac{1}{u^2+au+b}\right\}\cup \left\{\frac{(u-\Ga)^2}{u^2+au+b}: \Ga\in \F_q\right\}.$$
From \cite[Corollary 3.21]{LN83}, the number of monic irreducible polynomials of degree $2$ in $\F_q[u]$ is $(q^2-q)/2$. Hence,  the cardinality of $S_1$ is $|S_1|=(q^2-q)/2-1=(q-2)(q+1)/2.$
By choosing representatives with monic leading coefficients of numerators of elements in $V$, we can identify $V$ as $S_1\cup S_2 \cup S_3\cup \{1\}$.

It is easy to see that all elements in $S_3$ are equivalent to $1/(u^2+au+b)$. 
In order to determine the equivalence classes of the set $S_1$ under the relation $\sim$, we need to study the ramification behavior of places of $E$ with degree $2$ in the extension $E/K$. The following result can be found from \cite[Proposition 1.4.12]{NX01}.
\begin{lemma}\label{lem:3.8}
Let $F/K$ be an abelian extension of function fields and let $E$ be an intermediate field of $F/K$.
Assume that the place $P$ of $K$ is unramified in $F/K$. Then $P$ splits completely in $E/K$ if and only if the Artin symbol $[\frac{F/K}{P}]$ belongs to $\Gal(F/E)$.
\end{lemma}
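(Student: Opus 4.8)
The plan is to translate the splitting condition into a statement about the decomposition group and then exploit the abelian hypothesis to settle everything by a coset count. Write $G=\Gal(F/K)$ and $H=\Gal(F/E)$, so that $E=F^H$; since $F/K$ is abelian, $H$ is normal and $\Gal(E/K)\cong G/H$. Because $P$ is unramified in $F/K$, the inertia group of every place $Q$ of $F$ above $P$ is trivial, so the decomposition group $D(Q|P)$ is cyclic and generated by its Frobenius automorphism, which is precisely the Artin symbol $\sigma_P=[\frac{F/K}{P}]$. The first point I would record is that, as $G$ is abelian, the groups $D(Q|P)$ for the various $Q$ above $P$ (which are conjugate in $G$ in general) all coincide; hence $\sigma_P$ is a well-defined single element of $G$ and $D:=D(Q|P)=\langle\sigma_P\rangle$.

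Next I would count the places of $E$ lying over $P$. Restriction of places from $F$ to $E$ identifies the places of $E$ above $P$ with the orbits of $H$ acting on the places of $F$ above $P$; since $G$ acts transitively on the latter set with stabilizer $D$, these orbits are exactly the double cosets $H\backslash G/D$. Because $G$ is abelian, $HD$ is a subgroup and each double coset $HgD=g(HD)$ is an ordinary coset of $HD$, so the number of places of $E$ above $P$ equals $[G:HD]$.

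Finally I would feed this into the splitting criterion. As $P$ is unramified in $F/K$, it is unramified in the intermediate extension $E/K$ as well, so by the fundamental identity $\sum_{P'|P}f(P'|P)=[E:K]$ the place $P$ splits completely in $E/K$ exactly when the number of places of $E$ above $P$ attains its maximal value $[E:K]=[G:H]$. Comparing with the count above, $P$ splits completely iff $[G:HD]=[G:H]$, i.e.\ iff $HD=H$, i.e.\ iff $D\subseteq H$; and since $D=\langle\sigma_P\rangle$ this is equivalent to $\sigma_P\in H=\Gal(F/E)$, which is the asserted criterion. (Equivalently, one may invoke functoriality of the Artin symbol: the Frobenius of $E/K$ at a place over $P$ is the image of $\sigma_P$ under $G\to G/H$, and it is trivial for every such place, hence $P$ splits completely, iff $\sigma_P\in H$.)

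The genuinely routine steps — that unramifiedness descends from $F$ to $E$ and that $g\cdot f=[E:K]$ with $e=1$ — are standard. The step deserving the most care, and the main conceptual obstacle, is the identification of the places of $E$ above $P$ with the double cosets $H\backslash G/D$ together with their reduction to ordinary cosets of $HD$; it is exactly the abelian hypothesis that makes $\sigma_P$ a single well-defined element and forces the count to be $[G:HD]$, so this is where the argument must be stated precisely rather than where any computation is needed.
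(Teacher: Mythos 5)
Your proof is correct, but there is nothing in the paper to compare it against: the paper does not prove Lemma~\ref{lem:3.8} at all, it simply quotes it from \cite[Proposition 1.4.12]{NX01}. Your argument is a complete, self-contained proof of that quoted fact, and every step is soundly invoked: unramifiedness makes each decomposition group $D(Q|P)$ cyclic and generated by its Frobenius, namely the Artin symbol $\sigma_P$; the abelian hypothesis makes $D:=D(Q|P)=\langle\sigma_P\rangle$ independent of the choice of $Q$; the identification of the places of $E$ above $P$ with the $H$-orbits on the places of $F$ above $P$ (i.e.\ with $H\backslash G/D$, which collapses to ordinary cosets of the subgroup $HD$ because $G$ is abelian) gives the count $[G:HD]$; and the fundamental identity together with the descent of unramifiedness to $E/K$ converts complete splitting into the equality $[G:HD]=[G:H]$, hence into $D\subseteq H$, i.e.\ $\sigma_P\in\Gal(F/E)$. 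What your route buys is transparency: it shows exactly where the abelian hypothesis is used (well-definedness of $\sigma_P$ and the collapse of double cosets). The parenthetical alternative you sketch --- that the Frobenius of $E/K$ at any place above $P$ is the image of $\sigma_P$ under $G\to G/H\cong\Gal(E/K)$, and an unramified place in a Galois extension splits completely if and only if that Frobenius (equivalently, its decomposition group) is trivial --- is even shorter and avoids the orbit counting entirely; either version would have served as an in-paper proof had the authors chosen to include one rather than cite \cite{NX01}.
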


Now let us study the ramification behavior of places of $K$ with degree $2$ in the abelian extension $E/K$.

\begin{prop}\label{prop:3.9}
There are exactly $(q-3)/2$ distinct places of $K$ with degree $2$ which split completely in $E$, and there exists a unique rational place of $K$ which splits into $(q+1)/2$ places of $E$ with degree $2$. All such places of $E$ with degree $2$ correspond to the numerators of elements in $S_1$.
\end{prop}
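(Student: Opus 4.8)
The plan is to compute the Artin symbol (Frobenius) of each place of $K$ in the abelian extension $E/K$ by descending from the cyclotomic extension $F/K$, where the Frobenius is completely explicit. Recall from Proposition \ref{prop:2.2}(v) and Lemma \ref{lem:3.1} that the isomorphism $\Gal(F/K)\cong(\F_q[x]/(p(x)))^*\cong\F_{q^2}^*$ sends $\eta=\sigma_x$ to $\overline{x}:=x\bmod p(x)$, a generator of $\F_{q^2}^*$, and that under this identification $G=\Gal(F/E)=\langle\tau\rangle$ is precisely the subgroup $\F_q^*\subseteq\F_{q^2}^*$ of order $q-1$ (Proposition \ref{prop:3.2}), so that $\Gal(E/K)\cong\F_{q^2}^*/\F_q^*$ is cyclic of order $q+1$. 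For a place $P$ of $K$ given by a monic irreducible $g(x)\neq p(x)$, the Artin symbol $\left[\frac{F/K}{P}\right]$ equals $\sigma_g$, which corresponds to $g(\overline{x})\in\F_{q^2}^*$ from the theory of cyclotomic function fields \cite{Ha74}; hence its image in $\Gal(E/K)$ is the class $g(\overline{x})\,\F_q^*$. If $f$ denotes the order of this class, then $P$ splits in $E/K$ into $(q+1)/f$ places, each of degree $f\cdot\deg(P)$. Since only $p(x)$ ramifies in $E/K$ and it yields the single degree-two place $Q$ (the place of $p(u)$), every other degree-two place of $E$ lies over an unramified $P$ with $f\cdot\deg(P)=2$.

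First I would treat the degree-two places of $K$. By Lemma \ref{lem:3.8}, such a place $P$, given by $g(x)=x^2+cx+e$, splits completely in $E$ (the case $\deg(P)=2$, $f=1$) if and only if $g(\overline{x})\in\F_q^*$. Using $\overline{x}^2=-a\overline{x}-b$ one computes $g(\overline{x})=(c-a)\overline{x}+(e-b)$, which lies in $\F_q$ exactly when $c=a$. Thus the completely split places are those with $g(x)=x^2+ax+e$ irreducible and $e\neq b$ (the condition $e\neq b$ being $g\neq p(x)$). Over the odd field $\F_q$, $g$ is irreducible iff its discriminant $a^2-4e$ is a nonsquare; as $e$ runs over $\F_q$ the quantity $a^2-4e$ runs bijectively over $\F_q$, giving $(q-1)/2$ admissible values of $e$. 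Removing $e=b$, which is admissible since $p(x)$ is irreducible, leaves exactly $(q-3)/2$ such places.

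Next I would find the rational places of $K$ contributing degree-two places of $E$ (the case $\deg(P)=1$, $f=2$). For $P_\Ga$ given by $g(x)=x-\Ga$ with $\Ga\in\F_q$, set $\Gb:=g(\overline{x})=\overline{x}-\Ga$; since $\overline{x}\notin\F_q$ we have $\Gb\notin\F_q$, so $f\ge2$, and $f=2$ holds iff $\Gb^2\in\F_q$, equivalently $\Gb^q=-\Gb$. Using the Frobenius relation $\overline{x}^q=-\overline{x}-a$ this becomes $-\overline{x}-a-\Ga=-\overline{x}+\Ga$, i.e. $\Ga=-a/2$, a unique solution. For the infinite place $P_\infty$, Proposition \ref{prop:2.2}(iii) shows its ramification index in $F/K$ is $q-1=|G|$, so its inertia group is $G=\Gal(F/E)$; hence $P_\infty$ is unramified and splits completely in $E=F^G$ into $q+1$ rational places, contributing no degree-two place. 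Therefore exactly one rational place, namely $P_{-a/2}$, has $f=2$ and splits into $(q+1)/2$ places of $E$ of degree two.

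Finally I would verify that the enumeration closes up. The $(q-3)/2$ completely split degree-two places of $K$ produce $\frac{q-3}{2}(q+1)$ degree-two places of $E$, and $P_{-a/2}$ produces a further $(q+1)/2$, for a total of $\frac{(q-2)(q+1)}{2}=|S_1|$. These, together with $Q$, account for all degree-two places of $E$, which are in bijection with the $(q^2-q)/2$ monic irreducible quadratics in $\F_q[u]$; since $Q$ is the place of $p(u)$, the remaining degree-two places correspond exactly to the numerators of the elements of $S_1$. The main obstacle is the bookkeeping of the Artin symbol on the quotient $\F_{q^2}^*/\F_q^*$ — in particular pinning down $g(\overline{x})$ and its order by means of $\overline{x}^2=-a\overline{x}-b$ and $\overline{x}^q=-\overline{x}-a$ — and confirming that the completely split and order-two contributions exhaust, with no overlap, all degree-two places of $E$ other than $Q$.
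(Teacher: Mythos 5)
Your proof is correct, and its first half coincides with the paper's own argument: both compute the Artin symbol of a degree-two place $P$ of $K$ in the abelian extension $F/K$ (the class of $g \bmod p(x)$ under $\Gal(F/K)\cong(\F_q[x]/(p(x)))^*$), apply Lemma \ref{lem:3.8}, and conclude that $P$ splits completely in $E$ exactly when its polynomial has the form $x^2+ax+e$ with $e\neq b$, i.e. $p(x)+\delta$; your count via "discriminant $a^2-4e$ is a nonsquare" is the paper's count $\delta\in\F_q^*\setminus(-\F_q^2+a^2/4-b)$ in different clothing, both yielding $(q-3)/2$. The second half genuinely differs. The paper never examines the rational places of $K$ at all: it counts degree-two places of $E$ — $(q^2-q)/2$ in total, minus $Q$, minus the $(q-3)(q+1)/2$ lying over the completely split degree-two places of $K$ — and concludes by \cite[Theorem 3.7.2]{St09} that the leftover $(q+1)/2$ places must sit over a single rational place, whose existence and uniqueness are thus obtained non-constructively. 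You instead compute the Frobenius order at every rational place: $f=2$ forces $\overline{x}^q-\alpha=-(\overline{x}-\alpha)$, hence $\alpha=-a/2$ via $\overline{x}^q=-\overline{x}-a$ (valid since $q$ is odd), and you dispose of $P_\infty$ by noting that its inertia group in $F/K$ has order $q-1$, hence equals the unique such subgroup $G$ of the cyclic group $\Gal(F/K)$, so $P_\infty$ splits completely in $E=F^G$. Your closing tally then serves only as a consistency check, not as the engine of the proof. What your route buys is an explicit identification of the distinguished rational place as the zero of $x+a/2$, together with an explicit confirmation that $\infty$ contributes no degree-two places — information the paper's counting argument cannot provide and which could feed directly into the algorithm of Section \ref{sec:4}; what the paper's route buys is brevity, since it needs neither the Frobenius at rational places nor the inertia computation at infinity.
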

\begin{proof}
There are $(q^2-q)/2$ places of the rational function field $E$ over $\F_q$ with degree $2$ in total from \cite[Corollary 3.21]{LN83} and \cite[Proposition 1.2.1]{St09}.
Let $Q_i$ be all pairwise distinct places of $E$ with degree $2$ other than $Q$ and let $P_i$ be its restriction to $K$ for $1\le i\le (q-2)(q+1)/2$.
From Proposition \ref{prop:2.2}, $Q_i|P_i$ is unramified in $E/K$ and the Artin symbol of $P_i\in \PP_K$ in the abelian extension $F/K$ is given by $$\left[\frac{F/K}{P_i}\right](\Gl)=\Gl^{P_i}.$$
Since $E$ is the fixed subfield of $F$ with respect to $G$ from Proposition \ref{prop:3.2} and Proposition \ref{prop:3.4}, the Galois group of $F/E$ is $\Gal(F/E)=G=\{\tau_c\in \Gal(F/K): \tau_c(\lambda)=c\lambda \text{ for } c\in \F_q^*\}.$
 From Lemma \ref{lem:3.8}, $P_i$ splits completely in $E/K$ if and only if there exists an element $\delta_i\in \F_q^*$ such that
$$\left[\frac{F/K}{P_i}\right](\Gl)=\Gl^{P_i}=\Gl^{\delta_i}.$$
Hence, $P_i=x^2+ax+b+\delta_i$ must be a quadratic irreducible polynomial in $\F_q[x]$.
In fact, the quadratic polynomial $P_i$ is irreducible if and only if there doesn't exist $\Ga\in \F_q$ such that $\Ga^2+a\Ga+b+\delta_i=0$, i.e., for any $\Ga\in \F_q$, we have $$\delta_i\neq -(\Ga^2+a\Ga+b)=-\left(\Ga+\frac{a}{2}\right)^2+\frac{a^2}{4}-b\in -\F_q^2+\frac{a^2}{4}-b.$$ 
Hence, $P_i=x^2+ax+b+\delta_i$ with $\delta_i\in \F_q^*\setminus (-\F_q^2+a^2/4-b)$ are irreducible polynomials in $\F_q[x]$.
Since the cardinality of $-\F_q^2+a^2/4-b$ is $(q+1)/2$, there are exactly $q-1-(q+1)/2=(q-3)/2$ different choices of $\delta\in \F_q^*$ such that $P_i$ is irreducible, i.e., there are exactly $(q-3)/2$ places of $K$ with degree $2$ which split completely into $(q-3)(q+1)/2$ places of $E$ with degree $2$.

Let $P$ be a place of $K$ which is unramified in $F/K$ and let $R$ be any place of $F$ lying over $P$. The order of Artin symbol $\left[\frac{F/K}{P}\right]$ is $f(R|P)=\deg(R)/\deg(P)$.
Since there are $(q-2)(q+1)/2$ places of $E$ with degree $2$ except $Q$, the remaining $(q+1)/2$ places of $E$ with degree $2$ must lie over a rational place of $K$ from \cite[Theorem 3.7.2]{St09}.
\end{proof}

From Proposition \ref{prop:3.9}, $P_i=x^2+ax+b+\delta_i$ with $\delta_i\in \F_q^*\setminus (-\F_q^2+a^2/4-b)$ are distinct places of $K$ of degree $2$ which split completely in $E/K$. Let $Q_i=u^2+c_iu+d_i\in \F_q[u]$ be any place of $E$ lying over $P_i$ for $1\le i\le (q-3)/2$.
Now we try to characterize the equivalence classes of $S_1$.
Let $R_1$ be the set defined by  
$$R_1:=\left\{z_i=\frac{u^2+c_iu+d_i}{u^2+au+b}: 1\le i\le  (q-3)/2\right\}.$$

\begin{prop}\label{prop:3.10}
For any $z_i\neq z_j\in R_1$, one has $z_i\cdot \tau(z_j)\notin \F_q\cdot E^2$ for any $\tau\in \Gal(E/K)$, i.e., $[z_i]$ are distinct equivalence classes of $\mL(Q)\setminus \{0\}$ for $1\le i\le (q-3)/2$.
\end{prop}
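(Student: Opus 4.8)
The plan is to translate the membership $z_i\cdot\tau(z_j)\in\F_q\cdot E^2$ into a parity statement about a principal divisor and then to read off that parity from the splitting behaviour recorded in Proposition~\ref{prop:3.9}. The basic observation is that, since $E=\F_q(u)$ is a rational function field, every divisor of degree zero is principal; consequently a nonzero $w\in E$ lies in $\F_q\cdot E^2$ if and only if $\nu_P(w)$ is even for every place $P\in\PP_E$. Indeed, if all valuations are even then $(w)=2D$ with $\deg(D)=0$, so $D=(v)$ for some $v\in E^{*}$, whence $(w)=(v^2)$ and $w=\Ga v^2$ with $\Ga\in\F_q^{*}$; the converse is clear.

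First I would compute the relevant principal divisors. Writing $P_\infty$ for the pole of $u$ in $E$, both the numerator and the denominator of $z_i$ are monic of degree two, so $(u^2+c_iu+d_i)=Q_i-2P_\infty$ and $(u^2+au+b)=Q-2P_\infty$; the poles at infinity cancel and $(z_i)=Q_i-Q$. Because $Q$ is the unique place of $E$ lying over $p(x)$, it is fixed by every element of $\Gal(E/K)$, so $\tau(Q)=Q$. Applying $\tau$ to divisors via Lemma~\ref{lem:2.1} gives $(\tau(z_j))=\tau(Q_j)-Q$, and therefore
$$(z_i\cdot\tau(z_j))=Q_i+\tau(Q_j)-2Q.$$

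Now I would carry out the parity check. The coefficient at $Q$ equals $-2$, which is even. Since $Q_i$ and $Q_j$ lie over $P_i,P_j\neq p(x)$ while $\tau(Q)=Q$, neither $Q_i$ nor $\tau(Q_j)$ can coincide with $Q$. Hence if $Q_i\neq\tau(Q_j)$, the divisor above has coefficient $1$ at $Q_i$, an odd valuation, and $z_i\cdot\tau(z_j)\notin\F_q\cdot E^2$. The only remaining possibility is $Q_i=\tau(Q_j)$, in which case $(z_i\cdot\tau(z_j))=2(Q_i-Q)$ is even at every place and the product would indeed be a constant times a square.

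It thus remains to exclude $Q_i=\tau(Q_j)$ for $i\neq j$, and here Proposition~\ref{prop:3.9} is decisive. The places $P_i=x^2+ax+b+\delta_i$ are pairwise distinct (the $\delta_i$ are distinct) and each splits completely in the Galois extension $E/K$ of degree $q+1$, so the $q+1$ places of $E$ above $P_i$ form a single orbit of $\Gal(E/K)=\langle\sigma\rangle$. As $\tau$ fixes $K$ pointwise, $\tau(Q_j)$ lies over $\tau(P_j)=P_j$; since $P_i\neq P_j$, the fibres over $P_i$ and over $P_j$ are disjoint, so $Q_i\neq\tau(Q_j)$ for every $\tau\in\Gal(E/K)$. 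This gives $z_i\cdot\tau(z_j)\notin\F_q\cdot E^2$ for all $\tau$, i.e.\ $z_i\not\sim z_j$, so the classes $[z_i]$ are pairwise distinct. I expect the only delicate point to be the bookkeeping at the infinite place---$\sigma$ does not fix $P_\infty$---but this is handled automatically by passing to principal divisors, where $P_\infty$ has already cancelled; after that the argument is a parity count combined with the orbit-disjointness supplied by Proposition~\ref{prop:3.9}.
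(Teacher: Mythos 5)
Your proof is correct and takes essentially the same route as the paper: compute the principal divisor $(z_i\cdot\tau(z_j))=Q_i+\tau(Q_j)-2Q$ using $\tau(Q)=Q$ and Lemma \ref{lem:2.1}, then invoke Proposition \ref{prop:3.9} to conclude $Q_i\neq\tau(Q_j)$ because they lie over distinct places of $K$. The only difference is that you spell out the parity criterion (an odd valuation rules out membership in $\F_q\cdot E^2$, with the converse via rationality of $E$), a step the paper leaves implicit.
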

\begin{proof}
The principal divisor of $z_i=(u^2+c_iu+d_i)/(u^2+au+b)$ is given by $$(z_i)=\left(\frac{u^2+c_iu+d_i}{u^2+au+b}\right)=Q_i-Q.$$
From Lemma \ref{lem:2.1}, the principal divisor of $\tau(z_i)$ is $(\tau(z_i))=\tau((z_i))=\tau(Q_i-Q)=\tau(Q_i)-\tau(Q). $
Since $Q$ is totally ramified in $E/K$, one has $\tau(Q)=Q$ for $\tau\in \Gal(E/K)$.
For $z_i\neq z_j\in R_1$, we have $Q_i\neq Q_j$ and the principal divisor of $z_i\cdot \tau(z_j)$ is given by $$ (z_i\cdot \tau(z_j))=Q_i-Q+\tau(Q_j)-\tau(Q)=Q_i+\tau(Q_j)-2Q.$$
Since $Q_i$ and $Q_j$ lie over distinct places of $K$ with degree $2$ from Proposition \ref{prop:3.9}, we have $\tau(Q_j)\neq Q_i$ for any $\tau \in \Gal(E/K)$. Hence, we have  $z_i\cdot \tau(z_j)\notin \F_q\cdot E^2$.
\end{proof}

From Proposition \ref{prop:3.10}, $[z_i]$ are pairwise distinct equivalence classes of $\mL(Q)\setminus \{0\}$ for $1\le i\le (q-3)/2$.
In particular, their representative elements satisfy the following property.

\begin{prop}\label{prop:3.11}
For any $z_i\in R_1$ with $1\le i\le (q-3)/2$ and any automorphism $\tau\in \Gal(E/K)\setminus \{id\}$, one has $z_i\cdot \tau(z_i)\notin \F_q\cdot E^2$.
\end{prop}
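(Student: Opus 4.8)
The plan is to argue entirely through principal divisors, mirroring the computation in Proposition \ref{prop:3.10}, and to reduce the whole statement to the observation that the stabilizer of $Q_i$ in $\Gal(E/K)$ is trivial. First I would recall that $(z_i)=Q_i-Q$, and that $Q$ is totally ramified in $E/K$, so $\tau(Q)=Q$ for every $\tau\in\Gal(E/K)$. Combining this with Lemma \ref{lem:2.1}, the principal divisor of $z_i\cdot\tau(z_i)$ is
$$(z_i\cdot\tau(z_i))=Q_i-Q+\tau(Q_i)-\tau(Q)=Q_i+\tau(Q_i)-2Q.$$

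Next I would use the parity obstruction. Suppose for contradiction that $z_i\cdot\tau(z_i)\in\F_q\cdot E^2$, say $z_i\cdot\tau(z_i)=\alpha v^2$ with $\alpha\in\F_q^*$ and $v\in E$. Then its principal divisor equals $2(v)$, so every coefficient appearing in $Q_i+\tau(Q_i)-2Q$ must be even. The coefficient at $Q$ is $-2$, which is harmless; the issue is the two degree-two places $Q_i$ and $\tau(Q_i)$, each occurring with coefficient $+1$. For the divisor to be even this forces $Q_i=\tau(Q_i)$.

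The decisive step, and the one where the hypothesis $z_i\in R_1$ is used, is to rule out $Q_i=\tau(Q_i)$ when $\tau\neq id$. By construction $Q_i$ lies over the place $P_i=x^2+ax+b+\delta_i$ of $K$, which splits completely in $E/K$ by Proposition \ref{prop:3.9}. Since $E/K$ is Galois, $\Gal(E/K)$ acts transitively on the places of $E$ above $P_i$, and the stabilizer of $Q_i$ is precisely its decomposition group, of order $e(Q_i|P_i)\cdot f(Q_i|P_i)=1$ (equivalently, by orbit--stabilizer, $P_i$ splits into $q+1=[E:K]$ places, forcing a trivial stabilizer). Hence $\tau(Q_i)=Q_i$ implies $\tau=id$. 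For $\tau\neq id$ we therefore have $\tau(Q_i)\neq Q_i$, so $Q_i+\tau(Q_i)-2Q$ carries an odd coefficient at the distinct place $Q_i$, contradicting evenness; this yields $z_i\cdot\tau(z_i)\notin\F_q\cdot E^2$.

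I expect the only real subtlety to be the identification of the stabilizer with the decomposition group and the computation that complete splitting gives $e=f=1$; once that is made precise, the remainder is a direct divisor bookkeeping identical in spirit to Proposition \ref{prop:3.10}. Everything else—the evaluation of $(z_i)$, the invariance $\tau(Q)=Q$, and the even-divisor characterization of $\F_q\cdot E^2$—is routine.
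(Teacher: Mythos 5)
Your proposal is correct and follows essentially the same route as the paper: compute $(z_i\cdot\tau(z_i))=Q_i+\tau(Q_i)-2Q$ via Lemma \ref{lem:2.1} and $\tau(Q)=Q$, then use the complete splitting of $P_i$ from Proposition \ref{prop:3.9} to conclude $\tau(Q_i)\neq Q_i$ for $\tau\neq id$, which rules out membership in $\F_q\cdot E^2$. The paper states the last two steps tersely, while you make explicit both the parity obstruction and the trivial-stabilizer (decomposition group) argument; these are correct elaborations, not a different method.
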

\begin{proof}
The principal divisor of $z_i=(u^2+c_iu+d_i)/(u^2+au+b)$ is given by $$(z_i)=\left(\frac{u^2+c_iu+d_i}{u^2+au+b}\right)=Q_i-Q.$$
From Lemma \ref{lem:2.1} and Proposition \ref{prop:3.10}, the principal divisor of $z_i\cdot \tau(z_i)$ is given by $$\left(z_i\cdot \tau(z_i)\right)=Q_i-Q+\tau(Q_i)-\tau(Q)=Q_i+\tau(Q_i)-2Q.$$
Since $P_i$ splits completely in $E/K$ from Proposition \ref{prop:3.9}, we have $\tau(Q_i)\neq Q_i$. Hence, one has $z_i\cdot \tau(z_i)\notin \F_q\cdot E^2$.
\end{proof}

In the following, we want to determine the equivalence classes of $S_2$.
Since $E$ is a rational function field over $\F_q$, there exist exactly $q+1$ rational places which are places of $E$ lying over the infinity place $\infty$ of $K$ from Proposition \ref{prop:2.2}.
From Theorem \ref{thm:3.6}, the Galois group $\Gal(E/K)$ is a cyclic group generated by an automorphism $\sigma$ with order $q+1$.
Let $P_0$ be the zero of $u$ in $E$ and  $P_{\Gs^j}=\sigma^j(P_0)$ for $0\le j\le q$. From \cite[Theorem 3.7.1]{St09}, $P_{\Gs^j}$ are distinct rational places of $E$ for $0\le j\le q$.
In particular, we have $P_{\Gs^1}=\s(P_0)=P_\infty$ from Theorem \ref{thm:3.6}.
Let $\Ga_j$ be the element $u(P_{\Gs^j})$ in $\F_q$ for each $0\le j\neq 1\le q$. Then we have $P_{\Gs^j}=\Gs^j(P_0)=P_{u-\Ga_j}$. 
 It is easy to see that $\Ga_j$ are pairwise distinct elements of $\F_q$ for $0\le j\neq 1\le q$. Let $w_j=(u-\Ga_j)/(u^2+au+b)$ for $j\neq 1$ and let $R_2$ be a set defined by $$R_2:=\left\{w_j=\frac{u-\Ga_j}{u^2+au+b}: 2\le j\le (q+1)/2\right\}.$$

\begin{prop}\label{prop:3.12}
For $w_i\neq w_j\in R_2$, one has $w_i\cdot \s^{t}(w_j)\notin \F_q\cdot E^2$ for any $0\le t\le q$, i.e., $[w_j]$ are pairwise distinct equivalence classes of $\mL(Q)\setminus \{0\}$ for $2\le j\le (q+1)/2$.
\end{prop}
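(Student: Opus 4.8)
The plan is to reduce the non-equivalence to a purely combinatorial condition on the supports of principal divisors, exactly as in the proofs of Propositions \ref{prop:3.10} and \ref{prop:3.11}. The key observation is that since $E=\F_q(u)$ is a rational function field, its group of degree-zero divisor classes is trivial, so a nonzero $g\in E$ lies in $\F_q\cdot E^2$ if and only if its principal divisor $(g)$ has all even coefficients: if $(g)=2D$, then $D=\tfrac12(g)$ is a degree-zero divisor, hence principal, say $D=(h)$, and then $g/h^2\in\F_q^*$. Thus it suffices to show that $(w_i\cdot \sigma^t(w_j))$ always carries at least one odd coefficient.

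First I would compute the principal divisor of $w_j$. Because $u^2+au+b$ is irreducible, the denominator gives a simple pole at $Q$; the numerator $u-\alpha_j$ gives a simple zero at $P_{\Gs^j}=P_{u-\alpha_j}$; and comparing degrees of numerator and denominator produces a simple zero at the infinite place, which by Theorem \ref{thm:3.6} is $P_{\Gs^1}=\sigma(P_0)=P_\infty$. Hence $(w_j)=P_{\Gs^j}+P_{\Gs^1}-Q$. Applying $\sigma^t$ and invoking Lemma \ref{lem:2.1} together with $\sigma(Q)=Q$ (since $Q$ is the unique, hence fixed, place of $E$ over $p(x)$), I get $(\sigma^t(w_j))=P_{\Gs^{t+j}}+P_{\Gs^{t+1}}-Q$, with indices read modulo $q+1$. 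Adding the divisor of $w_i$ then yields
$$(w_i\cdot \sigma^t(w_j))=P_{\Gs^i}+P_{\Gs^1}+P_{\Gs^{t+j}}+P_{\Gs^{t+1}}-2Q.$$

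The remaining step, which is the crux, is to show that the four rational places above cannot split into two equal pairs. Since $i\ge 2$ forces $i\not\equiv 1$ and $j\ge 2$ forces $t+j\not\equiv t+1 \pmod{q+1}$, the places $P_{\Gs^i},P_{\Gs^1}$ are distinct and so are $P_{\Gs^{t+j}},P_{\Gs^{t+1}}$; hence any pairing achieving all even multiplicities must be crossed, i.e. either $i\equiv t+j,\ 1\equiv t+1$, or $i\equiv t+1,\ 1\equiv t+j$. The first case forces $t\equiv 0$ and then $i\equiv j$, contradicting $w_i\neq w_j$. The second case forces $i+j\equiv 2\pmod{q+1}$; but for $2\le i,j\le (q+1)/2$ one has $4\le i+j\le q+1$, so $i+j\bmod(q+1)$ lies in $\{4,\dots,q\}\cup\{0\}$ and can never equal $2$. (The possibility that all four indices coincide is excluded the same way, as it again demands $i\equiv 1$.) Therefore $(w_i\cdot \sigma^t(w_j))$ has an odd coefficient, so $w_i\cdot \sigma^t(w_j)\notin \F_q\cdot E^2$ for every $0\le t\le q$, and the classes $[w_j]$ are pairwise distinct. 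I expect the only subtle point to be the range restriction $j\le (q+1)/2$, which is precisely engineered so that $i+j\equiv 2\pmod{q+1}$ is unattainable.
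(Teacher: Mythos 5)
Your proof is correct and follows essentially the same route as the paper: the same computation of the principal divisor $(w_i\cdot\sigma^t(w_j))=P_{\Gs^i}+P_{\Gs^1}+P_{\Gs^{t+j}}+P_{\Gs^{t+1}}-2Q$, followed by the same two-case analysis on matching indices, with the first case forcing $t=0$, $i=j$ and the second forcing $i+j\equiv 2\pmod{q+1}$ (the paper phrases this as $i+j=q+3$), both impossible in the range $2\le i\neq j\le (q+1)/2$. The only difference is that you make explicit the criterion that $g\in\F_q\cdot E^2$ if and only if $(g)$ has all even coefficients (via triviality of the divisor class group of the rational function field), a fact the paper uses implicitly; this is a welcome clarification but not a different argument.
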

\begin{proof}
The principal divisor of $w_i$ is given by $$(w_i)=P_{\Gs^i}+P_\infty-Q=\Gs^i(P_0)+\Gs(P_0)-Q.$$
From Lemma \ref{lem:2.1}, the principal divisor of $\sigma^t(w_j)$ for each $0\le t\le q$ is
$$(\sigma^t(w_j))=\sigma^t(P_{\Gs^j})+\sigma^t(P_\infty)-\sigma^t(Q)=\Gs^{t+j}(P_0)+\Gs^{t+1}(P_0)-Q.$$
If $w_i \sim w_j$ for $2\le j\neq i\le (q+1)/2$, i.e., $w_i\cdot \sigma^t(w_j)\in \F_q\cdot E^2$ for some $0\le t\le q$, then we have 
$$\begin{cases} \Gs^{t+j}(P_0)=\Gs^i(P_0)\\  \Gs^{t+1}(P_0)=\Gs(P_0)\end{cases} \text{ or }\quad \begin{cases} \Gs^{t+j}(P_0)=\Gs(P_0)\\  \Gs^{t+1}(P_0)=\Gs^i(P_0)\end{cases}.$$
If $ \Gs^{t+1}(P_0)=\Gs(P_0)$, then we have $t=0$. It follows that $j=i$ which is impossible. 
If $ \Gs^{t+1}(P_0)=\Gs^i(P_0)$, then there exists an integer $t$ with $1\le t\le q$ such that $t+1=i$ and $t+j=1+q+1$. 
Hence, we obtain $i+j=q+3$ which is a contradiction. 
\end{proof}

Now we want to classify all elements of $S_2$ which are equivalent to $w_j$ for each $2\le j\le (q+1)/2$.
From the definition of the equivalence relation $\sim$, it is clear that $\Gs^{t}(w_j)$ is equivalent to $w_j$ for any $0\le t\le q$. 
From Lemma \ref{lem:2.1}, the principal divisor of $\sigma^t(w_j)$ for each $1\le t\le q$ is given by 
$(\sigma^t(w_j))=\Gs^{t+j}(P_0)+\Gs^{t+1}(P_0)-Q.$
If $\Gs^{t+j}(P_0)=P_\infty$, i.e., $t+j=1+(q+1)$, then $t+1=q+3-j$ and $w_j\sim w_{q+3-j}$ from the proof of Proposition \ref{prop:3.12}.
If $\Gs^{t+j}(P_0)\neq P_\infty$, i.e., $t+j\neq 1+(q+1)$, then $$w_j\sim \frac{(u-\Ga_{t+j})(u-\Ga_{t+1})}{u^2+au+b}. $$
Hence, the equivalence class $[w_j]$ contains at least $q+1$ elements in $S_2$ for each $2\le j\le (q+1)/2$.
For $j=(q+3)/2$, it is easy to verify that $w_{j}\cdot \sigma^{j-1}(w_{j})\in \F_q\cdot E^2$ and $[w_j]$ contains at least $(q+1)/2$ elements in $S_2$. The cardinality of $S_2$ is $q(q+1)/2$. 
Hence, $\{[w_i]: 2\le i\le (q+3)/2\}$ are all distinct equivalence classes of $S_2$. Furthermore, these representative elements in $R_2$ have the following property.

\begin{prop}\label{prop:3.13}
For any $w_j\in R_2$ with $2\le j\le (q+1)/2$, one has $w_j\cdot \s^{t}(w_j)\notin \F_q\cdot E^2$ for $1\le t\le q$, i.e., $w_j\cdot \tau(w_j)\notin  \F_q\cdot E^2$ for any $\tau\in \Gal(E/K)\setminus \{id\}$.
\end{prop}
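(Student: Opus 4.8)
The plan is to detect membership in $\F_q\cdot E^2$ purely from the principal divisor. Since $E=\F_q(u)$ is a rational function field it has genus $0$, so every divisor of degree $0$ is principal; consequently a function $h\in E^*$ lies in $\F_q\cdot E^2$ if and only if its principal divisor $(h)$ has all even coefficients. Indeed, if $(h)=2D$ then $\deg D=0$, so $D=(v)$ for some $v\in E^*$, whence $h/v^2\in\F_q^*$; the converse is immediate. Thus it will suffice to show that $\left(w_j\cdot\s^t(w_j)\right)$ fails to be twice a divisor for every $1\le t\le q$.

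First I would compute this divisor. From the proof of Proposition \ref{prop:3.12} we have $(w_j)=\Gs^{j}(P_0)+\Gs(P_0)-Q$, and since $Q$ is totally ramified it is fixed by $\Gal(E/K)$, so $\s^t(Q)=Q$; Lemma \ref{lem:2.1} then gives $(\s^t(w_j))=\Gs^{t+j}(P_0)+\Gs^{t+1}(P_0)-Q$. Adding,
\[
\left(w_j\cdot\s^t(w_j)\right)=\Gs^{j}(P_0)+\Gs(P_0)+\Gs^{t+j}(P_0)+\Gs^{t+1}(P_0)-2Q.
\]
The coefficient of $Q$ is already even, so the whole divisor is even precisely when the four rational places $\Gs^{j}(P_0),\,\Gs(P_0),\,\Gs^{t+j}(P_0),\,\Gs^{t+1}(P_0)$ cancel in pairs. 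Because $\s$ has order $q+1$, these places are governed by the residues of the indices $j,\,1,\,t+j,\,t+1$ modulo $q+1$, and cancellation means this multiset of four residues splits into two coincident pairs (the all-four-equal case being a degenerate instance).

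The heart of the argument is to enumerate the three possible pairings and rule each out using $2\le j\le (q+1)/2$, $1\le t\le q$, and the parity of $q+1$. The pairing $\{j,1\}$ forces $j\equiv 1$, impossible since $j\ge 2$; the pairing $\{j,t+j\},\{1,t+1\}$ forces $t\equiv 0\pmod{q+1}$, impossible for $1\le t\le q$. The remaining case $\{j,t+1\},\{1,t+j\}$ gives $j\equiv t+1$ and $1\equiv t+j\pmod{q+1}$; subtracting and adding these congruences yields $2(j-1)\equiv 0$ and $2t\equiv 0\pmod{q+1}$. This last step is where I expect the real work, and it is exactly where oddness of $p$ enters: since $q+1$ is even, $2t\equiv 0$ forces $t=(q+1)/2$, and then $j\equiv t+1=(q+3)/2$, which lies strictly above the range $2\le j\le (q+1)/2$, a contradiction. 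Hence no pairing is possible, the divisor $\left(w_j\cdot\s^t(w_j)\right)$ has an odd coefficient, and therefore $w_j\cdot\s^t(w_j)\notin\F_q\cdot E^2$ for every $1\le t\le q$; equivalently $w_j\cdot\tau(w_j)\notin\F_q\cdot E^2$ for all $\tau\in\Gal(E/K)\setminus\{id\}$.
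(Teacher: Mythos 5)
Your proof is correct and takes essentially the same route as the paper's: compute the principal divisor $\left(w_j\cdot\s^t(w_j)\right)=\Gs^j(P_0)+\Gs(P_0)+\Gs^{t+j}(P_0)+\Gs^{t+1}(P_0)-2Q$, note that membership in $\F_q\cdot E^2$ forces the four rational places to cancel in pairs, and show the only surviving pairing gives $t=(q+1)/2$ and $j=(q+3)/2$, contradicting $j\le (q+1)/2$. The only difference is one of explicitness: you justify the genus-zero even-divisor criterion and enumerate all three pairings, whereas the paper uses the easy direction of that criterion implicitly and dismisses the two trivial pairings without comment.
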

\begin{proof}
The principal divisor of $w_j\cdot \s^t(w_j)$ for each $1\le t\le q$ is given by
\begin{align*}
(w_j\cdot \s^t(w_j))&=P_{\Gs^j}+P_\infty-Q+\s^t(P_{\Gs^j})+\s^t(P_\infty)-\Gs^t(Q)\\ &=\s^j(P_0)+\s(P_0)+\s^{t+j}(P_0)+\s^{t+1}(P_0)-2Q.
\end{align*}
Assume that $w_j\cdot \s^t(w_j)\in \F_q \cdot E^2$. Since $1\le t\le q$, we must have $\s^{t+1}(P_0)=\s^j(P_0)$ and $\s^{t+j}(P_0)=\s(P_0)$, i.e., $ t+1=j$ and $ t+j=1+q+1$.
Hence, we obtain $j=(q+3)/2$ which is a contradiction.
This completes the proof.
\end{proof}

\begin{theorem}\label{thm:3.14}
For any $z_i\in R_1, w_j\in R_2$, we have $z_i\cdot \s^{t}(w_j)\notin \F_q\cdot E^2$ and $w_j\cdot \s^{t}(z_i)\notin \F_q\cdot E^2$ for $0\le t\le q$, i.e., $[z_i]$ and $[w_j]$ are pairwise distinct equivalence classes of $\mL(Q)\setminus \{0\}$ for $1\le i\le (q-3)/2$ and $2\le j\le (q+1)/2$.
\end{theorem}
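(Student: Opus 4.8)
The plan is to run the same principal-divisor computation used in Propositions \ref{prop:3.10}--\ref{prop:3.13}, but now to exploit a \emph{degree} discrepancy between the two families: every element of $R_1$ has a numerator of degree two (the place $Q_i$), while every element of $R_2$ has a numerator that is a single rational place. Since membership in $\F_q\cdot E^2$ forces the principal divisor of an element to have all even coefficients, it will suffice to exhibit a place occurring with odd multiplicity. The degree-two place coming from $R_1$ will play exactly this role, because it can never be matched or cancelled by rational places.

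First I would record the two principal divisors. By construction $(z_i)=Q_i-Q$ with $\deg(Q_i)=2$, and $(w_j)=\Gs^j(P_0)+\Gs(P_0)-Q$. Using Lemma \ref{lem:2.1} together with the fact that $Q$ is totally ramified in $E/K$, so that $\s(Q)=Q$, I obtain for each $0\le t\le q$
\begin{align*}
(z_i\cdot \s^t(w_j))&=Q_i+\Gs^{t+j}(P_0)+\Gs^{t+1}(P_0)-2Q,\\
(w_j\cdot \s^t(z_i))&=\Gs^j(P_0)+\Gs(P_0)+\s^t(Q_i)-2Q.
\end{align*}

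Now I would invoke the degree argument. In the first divisor the place $Q_i$ has degree two by Lemma \ref{lem:2.1}(1), whereas $\Gs^{t+j}(P_0)$ and $\Gs^{t+1}(P_0)$ are rational places; hence $Q_i$ cannot coincide with either of them and appears with the odd coefficient $1$. Likewise, in the second divisor the place $\s^t(Q_i)$ again has degree two and is therefore distinct from the two rational places $\Gs^j(P_0),\Gs(P_0)$, so it too occurs with odd coefficient. Since any element of $\F_q\cdot E^2$ has a principal divisor of the form $2(v)$ with all coefficients even, neither $z_i\cdot\s^t(w_j)$ nor $w_j\cdot\s^t(z_i)$ can lie in $\F_q\cdot E^2$, which is precisely the claim; combined with Propositions \ref{prop:3.10} and \ref{prop:3.12} this shows that the $[z_i]$ and $[w_j]$ form pairwise distinct equivalence classes of $\mL(Q)\setminus\{0\}$.

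I do not expect a genuine obstacle here: the entire argument reduces to the observation that $R_1$ and $R_2$ contribute zeros of different residue degrees, so no automorphism of $E/K$ (which preserves degrees) can ever pair them. The only point demanding a little care is the bookkeeping that the degree-two place really survives with odd multiplicity, that is, that it is never absorbed into the $-2Q$ term nor matched by a rational zero; both facts are immediate from $\deg(Q_i)=2$ and $\deg(\s^t(Q_i))=2$.
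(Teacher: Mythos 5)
Your proposal is correct and takes essentially the same route as the paper: the paper's proof likewise computes the principal divisor $(z_i\cdot\Gs^t(w_j))=Q_i+\Gs^{t+j}(P_0)+\Gs^{t+1}(P_0)-2Q$ and concludes it cannot be of the form $2(v)+(\text{const})$, merely leaving implicit the parity/degree observation that you spell out (and handling the case $w_j\cdot\Gs^t(z_i)$ via the symmetry of $\sim$ rather than a second divisor computation). One small precision: excluding $Q_i=Q$ is not a degree argument, since $\deg(Q)=2$ as well; it holds because $Q_i$ lies over $P_i\neq p(x)$ by construction — though even if the two coincided, the resulting coefficient $-1$ would still be odd, so your conclusion stands either way.
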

\begin{proof}
The principal divisor of $z_i\cdot \s^t(w_j)$ for each $0\le t\le q$ is given by
\begin{align*}
(z_i\cdot \s^t(w_j))&=Q_i-Q+\s^t(P_{\Gs^j})+\s^t(P_\infty)-Q\\ &=Q_i+\s^{t+j}(P_0)+\s^{t+1}(P_0)-2Q.
\end{align*}
It is easy to see that $z_i\cdot \s^{t}(w_j)\notin \F_q\cdot E^2$ for $0\le t\le q$.
From Proposition \ref{prop:3.10} and Proposition \ref{prop:3.12}, $[z_i]$ and $[w_j]$ are pairwise distinct equivalence classes of $\mL(Q)\setminus \{0\}$ for $1\le i\le (q-3)/2$ and $2\le j\le (q+1)/2$.
\end{proof}

\subsection{Binary sequences with a low correlation of length $q+1$}\label{subsec:3.3}
In this subsection, we provide a construction of binary sequences with a low correlation of length $q+1$ via cyclotomic function fields over finite fields with odd characteristics.
From Proposition \ref{prop:3.10}, let $ [z_1],[z_2],\cdots,[z_{\frac{q-3}{2}}]$ be pairwise distinct equivalence classes of $z_i\in R_1$.
From Proposition \ref{prop:3.12}, let $ [w_2],[w_3],\cdots,[w_{\frac{q+1}{2}}]$ be pairwise distinct equivalence classes of $w_j\in R_2$.
For simplicity, let $z_{\frac{q-5}{2}+j}=w_j$ for $2\le j\le (q+1)/2$. From Theorem \ref{thm:3.14}, $[z_i]$ are pairwise distinct equivalence classes of $\mL(Q)\setminus \{0\}$ for $1\le i\le q-2$.
Let $\eta$ be the quadratic character from $\F_q^*$ to $\mathbb{C}^*$, i.e.,
$$\eta(\Ga)=\begin{cases} 1 &\text{ if } \Ga \text{ is a square in } \F_q^*,\\ -1 & \text{ if } \Ga \text{ is a non-square in } \F_q^*.\end{cases}$$
Assume that $\eta(0)=1$. We can extend $\eta$ to a map from $\F_q$ to $\mathbb{C}^*$ by defining
$$\eta(\Ga)=\begin{cases} 1 &\text{ if } \Ga \text{ is a square in } \F_q,\\ -1 & \text{ if } \Ga \text{ is a non-square in } \F_q.\end{cases}$$
Let $P_0$ be the zero of $u$ in $E$ and  $P_{\Gs^j}=\sigma^j(P_0)$ for $0\le j\le q$.
For each equivalence class $[z_i]$ for $1\le i\le q-2$, we define a sequence $s_i$ as follows:
$$s_i=(s_{i,0},s_{i,1},\cdots,s_{i,q}) \text{ with } s_{i,j}=\eta(z_i(P_{\Gs^j})) \text{ for } 0\le j\le q.$$
In the following, we will show that this family of binary sequences $\{s_i: 1\le i\le q-2\}$ with length $q+1$ has a low correlation.

\begin{prop}\label{prop:3.15}
If $q=p^m$ is a power of odd prime, then the autocorrelation of $s_i$ with $1\le i\le q-2$ at delay $t$ for $1\le t\le q$ is upper bounded by $$|A_t(s_i)|\le 4+\lfloor 2\sqrt{q}\rfloor.$$
\end{prop}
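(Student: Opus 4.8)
The plan is to turn the autocorrelation into a character sum over the rational places of $E$ and then read that sum off from a Kummer extension of $E$. First I would apply Lemma \ref{lem:2.1}(3) with the automorphism $\s^{t}$, the function $\s^{-t}(z_i)$ and the place $\Gs^{j}(P_0)$: since $\s^t(\s^{-t}(z_i))=z_i$ and $\s^t(\Gs^{j}(P_0))=\Gs^{j+t}(P_0)$, one gets $z_i(P_{\Gs^{j+t}})=z_i(\s^{j+t}(P_0))=\s^{-t}(z_i)(P_{\Gs^{j}})$, which is legitimate because $z_i$ (hence $\s^{-t}(z_i)$) has poles only at the degree-two place $Q$ and is therefore regular at every rational place. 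Writing $h_i:=z_i\cdot \s^{-t}(z_i)$ and using the multiplicativity of $\eta$, this gives
\begin{equation*}
A_t(s_i)=\sum_{j=0}^{q}\eta\big(z_i(P_{\Gs^{j}})\big)\,\eta\big(\s^{-t}(z_i)(P_{\Gs^{j}})\big)=\sum_{j=0}^{q}\eta\big(h_i(P_{\Gs^{j}})\big),
\end{equation*}
where the last equality holds at every rational place at which neither factor vanishes.

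Next I would pass to the Kummer extension $E_i=E(y)$ with $y^2=h_i$. For $1\le t\le q$ the automorphism $\s^{-t}$ is nontrivial, so Propositions \ref{prop:3.11} and \ref{prop:3.13} give $h_i\notin \F_q\cdot E^2$; hence $E_i/E$ is a genuine quadratic extension with full constant field $\F_q$ by Proposition \ref{prop:2.3}, and the Serre bound applies. For a rational place $P$ of $E$ with $h_i(P)\neq 0,\infty$ the number of rational places of $E_i$ above $P$ equals $1+\eta(h_i(P))$, and every rational place of $E_i$ lies above a rational place of $E$. Thus, up to the finitely many rational places where $h_i$ has a zero, the displayed sum equals $N(E_i)-(q+1)$, where $N(E_i)$ is the number of rational places of $E_i$.

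The genus of $E_i$ comes from the Hurwitz genus formula applied to $(h_i)$. For $z_i\in R_1$ one has $(h_i)=Q_i+\s^{-t}(Q_i)-2Q$ with $Q_i,\s^{-t}(Q_i)$ distinct places of degree two (Proposition \ref{prop:3.11}), while for $z_i=w_j\in R_2$ one has $(h_i)=\s^{j}(P_0)+\s(P_0)+\s^{j-t}(P_0)+\s^{1-t}(P_0)-2Q$. In either case the ramified places of $E_i/E$ are exactly the points of odd valuation in $(h_i)$; since the characteristic is odd, each has different exponent $1$, and with $g(E)=0$ a short computation gives $\deg\text{Diff}(E_i/E)\le 4$, hence $g(E_i)\le 1$. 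The Serre bound then yields $|N(E_i)-(q+1)|\le g(E_i)\lfloor 2\sqrt q\rfloor\le \lfloor 2\sqrt q\rfloor$.

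Finally I would collect the correction terms. For $z_i\in R_1$ the function $h_i$ vanishes only at the degree-two places $Q_i,\s^{-t}(Q_i)$, so there are no exceptional rational places and $|A_t(s_i)|\le \lfloor 2\sqrt q\rfloor$ directly. For $z_i=w_j\in R_2$ at most four rational places are zeros of $h_i$, and at each such place the discrepancy between $s_{i,j}s_{i,j+t}$ (which uses $\eta(0)=1$) and the contribution of that place to $N(E_i)$ is bounded by a small constant; moreover whenever two of the four zeros coincide the different drops and $g(E_i)=0$, so the two effects trade off. Combining, $|A_t(s_i)|\le 4+\lfloor 2\sqrt q\rfloor$ in every case. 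I expect the main obstacle to be exactly this bookkeeping over the at-most-four exceptional rational places — reconciling the contribution of a ramified (or doubly vanishing) place both to the character sum and to $N(E_i)$ — together with confirming that the nondegeneracy hypothesis $h_i\notin\F_q\cdot E^2$ of Propositions \ref{prop:3.11} and \ref{prop:3.13} holds for all $1\le t\le q$, which is what keeps the constant field equal to $\F_q$ and makes the Serre bound applicable.
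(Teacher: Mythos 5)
Your proposal is correct and follows essentially the same route as the paper's proof: rewrite $A_t(s_i)$ via Lemma \ref{lem:2.1} as a quadratic character sum of $h_i=z_i\cdot\sigma^{-t}(z_i)$ over the rational places, invoke Propositions \ref{prop:3.11} and \ref{prop:3.13} to ensure $h_i\notin\F_q\cdot E^2$, pass to the Kummer extension $E_i=E(y)$ with $y^2=h_i$, bound $g(E_i)\le 1$ by Hurwitz, and apply the Serre bound with the at-most-four zeros of $h_i$ treated as an exceptional set. If anything, your explicit bookkeeping for coincident zeros (where the different drops and $g(E_i)=0$) is slightly more careful than the paper's uniform claim $N(E_i)=2N_1+|Z_i|$, but the argument and the final bound $4+\lfloor 2\sqrt{q}\rfloor$ are the same.
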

\begin{proof}
The autocorrelation of $s_i$ at delay $t$ for $1\le t\le q$ is given by
\begin{align*}
A_t(s_i)&= \sum_{j=0}^{q} s_{i,j}s_{i,j+t}=  \sum_{j=0}^q \eta(z_i(P_{\Gs^j}))\cdot \eta(z_i(P_{\Gs^{j+t}}))\\ &= \sum_{j=0}^q \eta(z_i(P_{\Gs^j}))\cdot \eta((\sigma^{-t}(z_i))(P_{\Gs^j})).
\end{align*}
Let $Z_i=\{0\le j\le q: (z_i\sigma^{-t}(z_i))(P_{\Gs^j})=0\}$. It is easy to see that $|Z_i|\le 4$, since $z_i\in \mL(Q)$ and $z_i\sigma^{-t}(z_i)\in \mL(2Q)$ from Proposition \ref{prop:3.11} and Proposition \ref{prop:3.13}. Thus, we have
$$A_t(s_i)= \sum_{j\in Z_i} \eta(z_i(P_{\Gs^j}))\cdot \eta(z_i(P_{\Gs^{j+t}}))+\sum_{j\notin Z_i} \eta((z_i\sigma^{-t}(z_i))(P_{\Gs^j})).$$
For any $z_i\in R_1\cup R_2$, we have $z_i\sigma^{-t}(z_i)\notin \F_q\cdot E^2$ for $1\le t\le q$ from Proposition \ref{prop:3.11} and Proposition \ref{prop:3.13}.
Consider the Kummer extension $E_i/E$ given by $E_i=E(y)$ with
$$y^2=z_i\cdot \sigma^{-t}(z_i).$$
If $z_i\in R_1$, then $(z_i\cdot \sigma^{-t}(z_i))=Q_i+\Gs^{-t}(Q_i)-2Q$  from Proposition \ref{prop:3.11}. Hence, there are two places of $E$ with degree $2$ which are totally ramified in $E_i/E$  from Proposition \ref{prop:2.3}.
If $z_i\in R_2$, then $(z_i\cdot \sigma^{-t}(z_i))=\Gs^i(P_0)+\Gs(P_0)+\Gs^{i-t}(P_0)+\Gs^{1-t}(P_0)-2Q$  from Proposition \ref{prop:3.13}. Hence, there are at most four rational places of $E$ which are totally ramified in $E_i/E$.
In both cases, we have $\deg \text{Diff}(E_i/E)\le 4$. Furthermore, $\F_q$ is the full constant field of $E_i$ from Proposition \ref{prop:2.3}.
The Hurwitz genus formula yields
$$2g(E_i)-2= 2\cdot (2g(E)-2)+\deg \text{Diff}(E_i/E).$$
Hence, the genus of $E_i$ is at most $1$ for each $1\le i\le q-2$.

Let $N_1$ denote the cardinality of the set $\{j\notin Z_i: \eta((z_i\cdot \sigma^{-t}(z_i))(P_{\Gs^j}))=1\}$ and let $N_{-1}$ denote the cardinality of the set $\{j\notin Z_i: \eta((z_i\cdot \sigma^{-t}(z_i))(P_{\Gs^j}))=-1\}$. It is clear that $$N_1+N_{-1}=q+1-|Z_i|.$$
From \cite[Theorem 3.3.7]{St09}, the number of rational places of $E_i$ is $N(E_i)=2N_1+|Z_i|$. From the Serre bound,  $N(E_i)$ is bounded by
$$q+1-\lfloor 2\sqrt{q}\rfloor\le N(E_i)=2N_1+|Z_i|\le q+1+\lfloor 2\sqrt{q}\rfloor.$$
Hence, we have $-\lfloor 2\sqrt{q}\rfloor\le N_1-N_{-1}\le \lfloor 2\sqrt{q}\rfloor$ and
$$|A_t(s_i)|\le |Z_i|+|N_1-N_{-1}|\le 4+\lfloor 2\sqrt{q}\rfloor.$$
\end{proof}

\begin{prop}\label{prop:3.16}
If $q=p^m$ is a power of odd prime, then the cross-correlation of $s_i$ and $s_j$ with $1\le i\neq j\le q-2$ at delay $t$ for $0\le t\le q$ is upper bounded by $$|C_t(s_i,s_j)|\le 4+\lfloor 2\sqrt{q}\rfloor.$$
\end{prop}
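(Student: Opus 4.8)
The plan is to mirror the proof of Proposition \ref{prop:3.15} almost verbatim, replacing the autocorrelation sum by the cross-correlation sum and invoking Theorem \ref{thm:3.14} in place of Propositions \ref{prop:3.11} and \ref{prop:3.13}. First I would write out
\begin{align*}
C_t(s_i,s_j)&=\sum_{k=0}^{q}s_{i,k}s_{j,k+t}=\sum_{k=0}^q \eta(z_i(P_{\Gs^k}))\cdot \eta((\sigma^{-t}(z_j))(P_{\Gs^k})),
\end{align*}
using the defining relation $s_{j,k+t}=\eta(z_j(P_{\Gs^{k+t}}))=\eta((\sigma^{-t}(z_j))(P_{\Gs^k}))$ from Lemma \ref{lem:2.1}(3). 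Then I would set $Z=\{0\le k\le q:(z_i\cdot \sigma^{-t}(z_j))(P_{\Gs^k})=0\}$ and note $|Z|\le 4$, since $z_i,z_j\in \mL(Q)$ forces $z_i\cdot\sigma^{-t}(z_j)\in\mL(2Q)$, so its divisor of zeros among the rational places $P_{\Gs^k}$ has degree at most $\deg(2Q)=4$.

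The crucial input is that $z_i\cdot \sigma^{-t}(z_j)\notin \F_q\cdot E^2$ for all $0\le t\le q$ whenever $i\neq j$, which is exactly what Theorem \ref{thm:3.14} guarantees once one checks that every relevant pair $(z_i,z_j)$ of representatives (drawn from $R_1\cup R_2$ under the reindexing $z_{(q-5)/2+j}=w_j$) lies in a distinct equivalence class. I would split into the three cases according to whether $z_i,z_j$ come from $R_1$, from $R_2$, or one from each: in each case the principal divisor $(z_i\cdot\sigma^{-t}(z_j))$ is a sum of at most four degree-one or degree-two places minus $2Q$, and Propositions \ref{prop:3.10}, \ref{prop:3.12} and Theorem \ref{thm:3.14} show the numerator places never coincide in a way that would make the product a square times a constant. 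Consequently the Kummer extension $E_{i,j}=E(y)$ with $y^2=z_i\cdot\sigma^{-t}(z_j)$ has $\F_q$ as full constant field (Proposition \ref{prop:2.3}(c)), at most four ramified places, hence $\deg\text{Diff}(E_{i,j}/E)\le 4$, and the Hurwitz genus formula
$$2g(E_{i,j})-2=2\cdot(2g(E)-2)+\deg\text{Diff}(E_{i,j}/E)$$
gives $g(E_{i,j})\le 1$.

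Finally I would repeat the character-sum-to-point-count argument: writing $N_1,N_{-1}$ for the number of $k\notin Z$ with $\eta((z_i\cdot\sigma^{-t}(z_j))(P_{\Gs^k}))=\pm 1$, one has $N_1+N_{-1}=q+1-|Z|$ and, by \cite[Theorem 3.3.7]{St09}, the rational place count $N(E_{i,j})=2N_1+|Z|$. The Serre bound yields $|N_1-N_{-1}|\le \lfloor 2\sqrt{q}\rfloor$, so that
$$|C_t(s_i,s_j)|\le |Z|+|N_1-N_{-1}|\le 4+\lfloor 2\sqrt{q}\rfloor,$$
as desired. The main obstacle is bookkeeping rather than new ideas: one must verify carefully, across the three cases and for the full range $0\le t\le q$ (including $t=0$, which is no longer automatically excluded as it was in the autocorrelation case), that no coincidence of the divisor's support can occur; this is precisely the content already packaged into Theorem \ref{thm:3.14}, so the remaining work is to confirm its hypotheses apply to every pair under the chosen indexing.
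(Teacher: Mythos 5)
Your proposal is correct and follows essentially the same route as the paper's own proof: the same decomposition of the cross-correlation sum over the zero set $Z_{i,j}$, the same appeal to Propositions \ref{prop:3.10}, \ref{prop:3.12} and Theorem \ref{thm:3.14} to rule out $z_i\cdot\sigma^{-t}(z_j)\in\F_q\cdot E^2$, the same case analysis on $R_1$/$R_2$ membership to bound $\deg\mathrm{Diff}(E_{i,j}/E)\le 4$, and the same Hurwitz--Serre point-counting conclusion. Nothing further is needed.
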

\begin{proof}
For two distinct sequences $s_i$ and $s_j$ in $\mS$, the cross-correlation of $s_i$ and $s_j$ with $1\le i\neq j\le q-2$ at delay $t$ for $0\le t\le q$ is given by
\begin{align*}C_t(s_i,s_j)&= \sum_{k=0}^{q} s_{i,k}s_{j,k+t} =  \sum_{k=0}^q \eta(z_i(P_{\Gs^k}))\cdot \eta(z_j(P_{\Gs^{k+t}})) \\
&=\sum_{k=0}^q \eta(z_i(P_{\Gs^k}))\cdot \eta((\sigma^{-t}(z_j))(P_{\Gs^k})).\end{align*}
Let $Z_{i,j}=\{0\le k\le q: (z_i\sigma^{-t}(z_j))(P_{\Gs^k})=0\}$ be the zeros of $z_i\sigma^{-t}(z_j)$ with degree one. It is easy to see that $|Z_{i,j}|\le 4$, since $z_i\sigma^{-t}(z_j)\in \mL(2Q)$ for any $z_i,z_j\in \mL(Q)$. Thus, we have
$$C_t(s_i,s_j) =  \sum_{j\in Z_{i,j}} \eta(z_i(P_{\Gs^k}))\cdot \eta((\sigma^{-t}(z_j))(P_{\Gs^k}))+\sum_{j\notin Z_{i,j}} \eta((z_i\sigma^{-t}(z_j))(P_{\Gs^k})).$$
For $z_i\neq z_j\in R_1\cup R_2$, we have $z_i\sigma^{-t}(z_j)\notin \F_q\cdot E^2$ from Proposition \ref{prop:3.10}, Proposition \ref{prop:3.12} and Theorem \ref{thm:3.14}.
Let us consider the Kummer extension $E_{i,j}=E(y)$ with
$$y^2=z_i\cdot \sigma^{-t}(z_j).$$
If $z_i\neq z_j\in R_1$, then there are two places of $E$ with degree $2$ which are totally ramified in $E_{i,j}/E$ from Proposition \ref{prop:3.10}.
If $z_i\neq z_j\in R_2$, then there are at most four rational places of $E$ which are totally ramified in $E_{i,j}/E$ from  Proposition \ref{prop:3.12}.
If $z_i\in R_1, z_j\in R_2$ or $z_i\in R_2, z_j\in R_1$, then there are a place of degree $2$ and two rational places of $E$ which are totally ramified in $E_{i,j}/K$ from Theorem \ref{thm:3.14}.
In these cases, we have $\deg \text{Diff}(E_{i,j}/E)\le 4$.
From Proposition \ref{prop:2.3}, $\F_q$ is the full constant field of $E_{i,j}$ as well.
The Hurwitz genus formula yields
$$2g(E_{i,j})-2= 2\cdot (2g(E)-2)+\deg \text{Diff}(E_{i,j}/E).$$
Hence, the genus of $E_{i,j}$ is at most $1$ for any $1\le i\neq j\le q-2$.

Let $N_1$ be the number of the set $\{k\notin Z_{i,j}: \eta((z_i\cdot \sigma^{-t}(z_j))(P_{\Gs^k})))=1\}$ and
let $N_{-1}$ be the number of the set $\{k\notin Z_{i,j}: \eta((z_i\cdot \sigma^{-t}(z_j))(P_{\Gs^k}))=-1\}$.
It is clear that $$N_1+N_{-1}=q+1-|Z_{i,j}|.$$
From \cite[Theorem 3.3.7]{St09} and the Serre bound, the number of rational places of $E_{i,j}$ is bounded by
$$q+1-\lfloor 2\sqrt{q}\rfloor\le N(E_{i,j})=2N_1+|Z_{i,j}|\le q+1+\lfloor 2\sqrt{q}\rfloor.$$
Hence, we have $-\lfloor 2\sqrt{q}\rfloor\le N_1-N_{-1}\le \lfloor 2\sqrt{q}\rfloor$ and
$$|C_t(s_i,s_j)|\le |Z_{i,j}|+|N_1-N_{-1}|\le  4+\lfloor 2\sqrt{q}\rfloor.$$
The proof is completed.
\end{proof}

\begin{theorem}\label{thm:3.17}
If $q=p^m$ is a power of odd prime, then there exists a family of binary sequences $\mS=\{s_i: 1\le i\le q-2\}$ of length $q+1$ with a correlation upper bounded by $$\text{Cor}(\mS)\le 4+\lfloor 2\sqrt{q}\rfloor.$$
\end{theorem}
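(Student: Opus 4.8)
The plan is to recognize that Theorem \ref{thm:3.17} packages together the two correlation estimates already established, so the proof reduces to confirming that $\mS$ is a well-defined family of the claimed size and then invoking the definition of correlation in \eqref{eq:2.3}. First I would pin down the family size. The chosen representatives are $[z_1],\dots,[z_{(q-3)/2}]$ coming from $R_1$ by Proposition \ref{prop:3.10}, together with $[w_2],\dots,[w_{(q+1)/2}]$ coming from $R_2$ by Proposition \ref{prop:3.12}; these contribute $(q-3)/2$ and $(q-1)/2$ classes respectively, which sum to $q-2$. Theorem \ref{thm:3.14} guarantees that all of these equivalence classes are pairwise distinct, so after the relabelling $z_{(q-5)/2+j}=w_j$ we obtain exactly the sequences $s_1,\dots,s_{q-2}$, each of length $N=q+1$ since the index $j$ runs over $0\le j\le q$.

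Next I would appeal directly to the two preceding propositions. By \eqref{eq:2.3}, the correlation $\text{Cor}(\mS)$ is the maximum of two quantities: the largest $|A_t(s_i)|$ taken over $1\le i\le q-2$ and $1\le t\le q$, and the largest $|C_t(s_i,s_j)|$ taken over distinct indices and $0\le t\le q$. Proposition \ref{prop:3.15} bounds every autocorrelation value by $4+\lfloor 2\sqrt{q}\rfloor$, while Proposition \ref{prop:3.16} bounds every cross-correlation value by the identical quantity. Since both competing maxima are bounded by the same number, taking the outer maximum yields $\text{Cor}(\mS)\le 4+\lfloor 2\sqrt{q}\rfloor$, which is precisely the assertion.

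Because the substantive analysis is already discharged in the two propositions, there is no real obstacle at the level of this theorem; it is essentially a bookkeeping statement. The one point deserving a moment's care is ensuring that distinct equivalence classes genuinely yield sequences whose pairwise correlation is controlled, and this is exactly what Propositions \ref{prop:3.11}, \ref{prop:3.13} (for the autocorrelation, via $z_i\sigma^{-t}(z_i)\notin \F_q\cdot E^2$) and Propositions \ref{prop:3.10}, \ref{prop:3.12}, \ref{thm:3.14} (for the cross-correlation) provide. For completeness I would recall that the genuine difficulty already overcome upstream is the genus bound $g(E_i),g(E_{i,j})\le 1$, obtained from the Hurwitz genus formula together with $\deg\,\text{Diff}\le 4$; it is this bound, fed into the Serre bound to control $N_1-N_{-1}$, that produces the gain over a trivial estimate, and it has been carried out in full in Propositions \ref{prop:3.15} and \ref{prop:3.16}.
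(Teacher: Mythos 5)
Your proposal is correct and matches the paper's own proof, which simply observes that Theorem \ref{thm:3.17} follows immediately from Propositions \ref{prop:3.15} and \ref{prop:3.16}; your additional bookkeeping on the family size $(q-3)/2+(q-1)/2=q-2$ and the appeal to the definition in \eqref{eq:2.3} is accurate and consistent with the construction in Subsection \ref{subsec:3.3}.
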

\begin{proof}
This theorem follows immediately from Proposition \ref{prop:3.15} and Proposition \ref{prop:3.16}.
\end{proof}

\section{Algorithm and numerical results}\label{sec:4}
The previous section provides a theoretical construction of binary sequences with a low correlation via cyclotomic function fields over finite fields with odd characteristics.
In fact, such a family of binary sequences constructed in Section \ref{sec:3} can be realized explicitly.  In this section, we provide an explicit construction of binary sequences via explicit automorphisms and compute some examples for finite fields of small sizes with the help of the software Sage.

From Theorem \ref{thm:3.6}, the Galois group  $\Gal(E/K)$ is a cyclic group generated by $\Gs$. From  \cite[Proposition 4.4]{JMX22}, all automorphisms $\s^j$ can be calculated explicitly by the following recursive relations for $0\le j\le q$.

\begin{lemma}\label{lem:4.1}
Let $a_0=1, b_0=0,c_0=0,d_0=1$ and $a_1=0, b_1=-b, c_1=1,d_1=a$.
Let $\s^j$ be the automorphism of $\F_q(u)$ determined by
$$\s^j(u)=\frac{a_ju+b_j}{c_ju+d_j}.$$
Then $a_j,b_j,c_j,d_j$ with $0\le j\le q$ can be obtained from the following recursive equations
$$\begin{cases}
a_{j+1}=a_1a_j+c_1b_j=b_j,\\
b_{j+1}=b_1a_j+d_1b_j=-ba_j+a b_j,\\
c_{j+1}=a_1c_j+c_1d_j=d_j,\\
d_{j+1}=b_1c_j+d_1d_j=-bc_j+a d_j.
\end{cases}$$
\end{lemma}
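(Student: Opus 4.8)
The plan is to prove the formula $\sigma^j(u) = \frac{a_ju+b_j}{c_ju+d_j}$ by induction on $j$, using the explicit description $\sigma(u)=\frac{-b}{u+a}$ from Theorem~\ref{thm:3.6} together with the fact that $\sigma$ is an $\F_q$-automorphism of $E=\F_q(u)$. The key observation is that every $\sigma^j$ is a Möbius transformation in $u$, and that composing two such transformations is nothing but substituting one rational expression into another; this substitution is exactly what produces the stated recursion.

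First I would verify the base cases. The identity $\sigma^0$ gives $\sigma^0(u)=u=\frac{1\cdot u+0}{0\cdot u+1}$, so $(a_0,b_0,c_0,d_0)=(1,0,0,1)$ as required, and Theorem~\ref{thm:3.6} gives $\sigma(u)=\frac{-b}{u+a}=\frac{0\cdot u+(-b)}{1\cdot u+a}$, matching $(a_1,b_1,c_1,d_1)=(0,-b,1,a)$. For the inductive step I would assume $\sigma^j(u)=\frac{a_ju+b_j}{c_ju+d_j}$ and compute $\sigma^{j+1}(u)=\sigma(\sigma^j(u))$. Since $\sigma$ fixes every element of $\F_q$, applying $\sigma$ to the rational function $\sigma^j(u)$ amounts to replacing $u$ by $\sigma(u)=\frac{-b}{u+a}$, which yields
\begin{align*}
\sigma^{j+1}(u)=\frac{a_j\cdot\frac{-b}{u+a}+b_j}{c_j\cdot\frac{-b}{u+a}+d_j}.
\end{align*}
Clearing the inner denominator by multiplying numerator and denominator by $u+a$ turns the numerator into $b_ju+(ab_j-a_jb)$ and the denominator into $d_ju+(ad_j-c_jb)$. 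Reading off coefficients gives $a_{j+1}=b_j$, $b_{j+1}=-ba_j+ab_j$, $c_{j+1}=d_j$ and $d_{j+1}=-bc_j+ad_j$, which is precisely the claimed recursion.

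There is no serious obstacle here: the statement is essentially a bookkeeping identity for the iterated Möbius transformation attached to $\sigma$. The only point requiring care is the interpretation of $\sigma(\sigma^j(u))$, namely that applying the $\F_q$-automorphism $\sigma$ to the rational function $\sigma^j(u)\in\F_q(u)$ is the same as substituting $\sigma(u)$ for $u$; this is valid precisely because the coefficients $a_j,b_j,c_j,d_j$ lie in $\F_q$ and are therefore fixed by $\sigma$. One may phrase the entire argument equivalently in matrix language: the tuple $(a_j,b_j,c_j,d_j)$ is the entry vector of the matrix $\begin{pmatrix}0&-b\\1&a\end{pmatrix}^{j}$, and the recursion is just the identity $M^{j+1}=M^{j}M$ under the correspondence between Möbius transformations and matrix multiplication, with well-definedness guaranteed by $\det\begin{pmatrix}0&-b\\1&a\end{pmatrix}=b\neq 0$ (recall from Proposition~\ref{prop:3.2} that $b$ generates $\F_q^*$).
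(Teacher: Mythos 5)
Your proof is correct and follows essentially the same route as the paper's: both argue by induction, writing $\sigma^{j+1}(u)=\sigma(\sigma^j(u))$, using that $\sigma$ fixes $\F_q$ to turn this into the substitution of $\sigma(u)=\frac{-b}{u+a}$ into $\frac{a_ju+b_j}{c_ju+d_j}$, and then reading off the coefficients after clearing the inner denominator. Your added remarks on the matrix formulation $M^{j+1}=M^jM$ and the nonvanishing determinant are sound but not needed beyond what the paper does.
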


\begin{prop}\label{prop:4.2}
Let $P_0$ be the zero of $u$ in $E=\F_q(u)$, $\Gs$ be the $\F_q$-automorphism of $E$ determined by $\Gs(u)=-b/(u+a)$ and  $P_{\Gs^j}=\Gs^j(P_0)$.
Then  $P_{\Gs^1}=P_\infty$  and $P_{\Gs^j}$ corresponds to the linear polynomial  $u+a_j^{-1}b_j$ for each $0\le j\neq 1 \le q$.
\end{prop}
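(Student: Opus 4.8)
The plan is to compute the images $\Gs^j(P_0)$ directly from the explicit M\"obius description of $\Gs^j$ provided by Lemma \ref{lem:4.1}. Since $P_0$ is the zero of $u$, evaluating $\Gs^j$ at $P_0$ amounts to determining the value $\Gs^j(u)(P_0)$, or equivalently locating the place whose local parameter corresponds to $u=\Gs^{-j}(0)$. First I would recall that $P_{\Gs^j}=\Gs^j(P_0)$ is the rational place at which $\Gs^{-j}(u)$ vanishes, so I must track where the rational function $\Gs^j(u)=(a_ju+b_j)/(c_ju+d_j)$ sends the point $u=0$. Reading off the formula, $\Gs^j(u)$ evaluated through the residue map at the place $P_0$ (zero of $u$) gives $b_j/d_j$; but what we actually want is the place corresponding to $P_{\Gs^j}$, namely the zero in $E$ of the linear polynomial $u-\Ga_j$ where $\Ga_j=u(P_{\Gs^j})$.

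The cleaner route is to use the functional identity $P_{\Gs^j}=\Gs^j(P_0)$ together with Lemma \ref{lem:2.1}(3), which says $\Gs^j(f)(\Gs^j(P))=f(P)$ whenever $\nu_P(f)\ge 0$. Applying this with $f=u$ and $P=P_0$: since $u(P_0)=0$, I need the place $P_{\Gs^j}$ at which the function $(\Gs^j)^{-1}(u)=\Gs^{-j}(u)$ takes the value $0$. Writing $\Gs^{-j}$ again as a M\"obius transformation and solving for its zero locus yields a linear polynomial in $u$. Concretely, I expect that $u(P_{\Gs^j})=-a_j^{-1}b_j$ emerges by setting the numerator of the appropriate M\"obius expression to zero; this is why the claimed polynomial is $u+a_j^{-1}b_j$. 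I would verify the special case $j=1$ separately: from $a_1=0,b_1=-b,c_1=1,d_1=a$ and Theorem \ref{thm:3.6} giving $\Gs(u)=-b/(u+a)$, the image $\Gs(P_0)$ is the pole of $u$, i.e. $P_{\Gs^1}=P_\infty$, which matches the statement's exclusion of $j=1$ (there $a_1=0$ so $a_1^{-1}b_j$ is undefined, consistent with the place being at infinity).

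For the general $j\neq 1$ case, I would argue that $a_j\neq 0$ precisely when $P_{\Gs^j}\neq P_\infty$, so that $a_j^{-1}b_j$ makes sense, and then identify $\Ga_j=-a_j^{-1}b_j$. The identification comes from observing that $P_{\Gs^j}=\Gs(P_{\Gs^{j-1}})$ and that under the recursion the coefficients propagate so that the zero of $u$ at step $j$ is the image under one further application of $\Gs$; alternatively, a direct determinant/inverse computation of the $2\times2$ matrix $\begin{pmatrix} a_j & b_j\\ c_j & d_j\end{pmatrix}$ gives the inverse transformation whose zero is $-a_j^{-1}b_j$. Either way the key is that the place $P_{\Gs^j}$, being rational and distinct from $P_\infty$ for $j\neq 1$ (by Theorem \ref{thm:3.6}, the $P_{\Gs^j}$ are $q+1$ distinct rational places and exactly one of them is $P_\infty$), must be the zero of some $u-\Ga_j$, and matching this against the coefficients forces $\Ga_j=-a_j^{-1}b_j$.

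The main obstacle I anticipate is bookkeeping the direction of the action correctly: one must be careful whether $P_{\Gs^j}=\Gs^j(P_0)$ corresponds to the zero of $\Gs^j(u)$ or of $\Gs^{-j}(u)$, since confusing $\Gs^j$ with its inverse flips $b_j/d_j$ against $-a_j^{-1}b_j$. Pinning this down rigorously via Lemma \ref{lem:2.1}(3) is the delicate step; the remaining computation with the recursion of Lemma \ref{lem:4.1} is then routine linear algebra on $2\times2$ matrices. I would also confirm that $a_j\neq0$ for all $j\neq1$ in the range $0\le j\le q$, which follows because $a_j=0$ would force $P_{\Gs^j}=P_\infty=P_{\Gs^1}$, contradicting the distinctness of the $q+1$ places $P_{\Gs^j}$ guaranteed by the cyclic order-$(q+1)$ action of $\Gs$.
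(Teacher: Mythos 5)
Your overall plan is sound, and two of your side remarks are genuinely useful: the separate treatment of $j=1$, and the observation that $a_j=0$ for some $j\neq 1$ would force $P_{\Gs^j}=P_\infty=P_{\Gs^1}$, contradicting the distinctness of the $q+1$ places $\Gs^j(P_0)$ (the paper asserts ``$a_j\neq 0$, i.e., $j\neq 1$'' without argument). But the step you yourself single out as the delicate one --- whether $P_{\Gs^j}$ is located by $\Gs^j(u)$ or by $\Gs^{-j}(u)$ --- is never resolved, and the resolution you actually commit to in the body of the argument is the wrong one. You write that $P_{\Gs^j}$ is ``the place at which $\Gs^{-j}(u)$ takes the value $0$'' and that the answer comes from setting the numerator of that inverse M\"obius expression to zero. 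Under the paper's convention (Lemma \ref{lem:2.1}) this is false: applying Lemma \ref{lem:2.1}(2) with the automorphism $\Gs^{-j}$ gives $\nu_{\Gs^{-j}(P_0)}(\Gs^{-j}(u))=\nu_{P_0}(u)=1$, so the zero of $\Gs^{-j}(u)$ is the place $\Gs^{-j}(P_0)$, not $\Gs^{j}(P_0)$; concretely $\Gs^{-j}(u)=(d_ju-b_j)/(-c_ju+a_j)$, whose numerator vanishes at $u=b_jd_j^{-1}$, not at $u=-a_j^{-1}b_j$. The same reversal recurs when you assert that the inverse transformation's ``zero is $-a_j^{-1}b_j$''. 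So executing your stated plan literally produces the wrong linear polynomial; the gap is not mere bookkeeping but the central identification.

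The correct one-line argument, which is essentially the paper's entire proof, runs in the forward direction: by Lemma \ref{lem:2.1}(2) (or (3) with $f=u$, which gives $\Gs^j(u)(P_{\Gs^j})=u(P_0)=0$), the place $P_{\Gs^j}=\Gs^j(P_0)$ is the unique zero of the function $\Gs^j(u)=(a_ju+b_j)/(c_ju+d_j)$, hence the zero of its numerator: when $a_j\neq 0$ this is the zero of $u+a_j^{-1}b_j$, and when $a_j=0$ (the case $j=1$) the zero is $P_\infty$. If you prefer to work with $\Gs^{-j}$, the legitimate statement is an evaluation, not a zero-finding: Lemma \ref{lem:2.1}(3) with $f=\Gs^{-j}(u)$ gives $u(P_{\Gs^j})=\Gs^{-j}(u)(P_0)$, and evaluating $(d_ju-b_j)/(-c_ju+a_j)$ at $u=0$ yields $-a_j^{-1}b_j$, consistent with the forward route. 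Either of these closes the gap; as written, your proof commits to the direction that fails.
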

\begin{proof}
Since $\Gs(u)=-b/(u+a)$, we have $P_{\Gs^1}=\Gs(P_0)=P_\infty$.
From Lemma \ref{lem:4.1}, the automorphism $\Gs^j$ is determined by $\s^j(u)=(a_ju+b_j)/(c_ju+d_j).$
If $a_j\neq 0$, i.e., $j\neq 1$, then $P_{\Gs^j}$ corresponds to the linear polynomial $u+a_j^{-1}b_j$.
\end{proof}

Let $[z_1],[z_2],\cdots,[z_{q-2}]$ be the distinct equivalence classes of $\mL(Q)\setminus \{0\}$ under the equivalence relation $\sim$ in Subsection \ref{subsec:3.2}.
From the theory of rational function fields, we have $z_i(P_{\Gs^j})=z_i(u)|_{u=-b_j/a_j}$ if $0\le j\neq 1\le q$; otherwise, $z_i(P_{\Gs^1})=z_i(P_\infty)=1$ for $1\le i\le (q-3)/2$ and $z_i(P_{\Gs^1})=z_i(P_\infty)=0$ for $(q-1)/2\le i\le q-2$.
Now, let us provide an explicit construction of binary sequences with a low correlation via cyclotomic function fields over finite fields with odd characteristics obtained from Theorem \ref{thm:3.17}.
Such an explicit construction of binary sequences can be presented as follows.

\begin{center}
Construction of binary sequences with a low correlation
\end{center}
\begin{itemize}
\item Step 1: Input an odd prime $p$ and a prime power $q=p^m$.

\item Step 2: Find a primitive quadratic irreducible polynomial $p(x)=x^2+ax+b\in \F_q[x]$.

\item Step 3: Let $\s$ be an $\F_q$-automorphism of $\F_q(u)$ determined by  $\sigma(u)=-b/(u+a).$
Let $a_0=d_0=1,b_0=c_0=0$. For each $0\le j\le q$, calculate the explicit formula $\s^j(u)=(a_ju+b_j)/(c_ju+d_j)$ via the following recursive equations:
$$\begin{cases}
a_{j+1}=b_j,\\
b_{j+1}=-ba_j+a b_j,\\
c_{j+1}=d_j,\\
d_{j+1}=-bc_j+a d_j.
\end{cases}$$

\item Step 4:
Determine a quadratic irreducible polynomial $u^2+c_i u+d_i$ lying over the place $P_i=x^2+ax+b+\delta_i$ with $\delta_i\in \F_q^*\setminus (-\F_q^2+a^2/4-b)$ in the extension $E/K$ for each $1\le i\le (q-3)/2$. Define $z_i=(u^2+c_i u+d_i)/(u^2+au+b)$ for $1\le i\le (q-3)/2$.

\item Step 5: Let $\Ga_j=u(P_{\Gs^j})=-b_j/a_j$ for $2\le j\le (q+1)/2$. For $2\le j\le (q+1)/2$,  $$z_{\frac{q-5}{2}+j}=\frac{u-\Ga_j}{u^2+au+b}.$$

\item Step 6: Let $\eta$ be a map from $\F_q$ to $\mathbb{C}^*$ defined by
$$\eta(\Ga)=\begin{cases} 1 &\text{ if } \Ga \text{ is a square in } \F_q,\\ -1 & \text{ if } \Ga \text{ is a non-square in } \F_q.\end{cases}$$
Output a family of sequences $\mS=\{s_i: 1\le i\le q-2\}$ defined by $$s_i=(s_{i,0},s_{i,1},\cdots,s_{i,q}) \text{ with } s_{i,j}=\eta(z_i(P_{\Gs^j})) \text{ for } 0\le j\le q.$$

\item Step 7: 
Output the correlation $Cor(\mS)=\max\{\{|\sum_{k=0}^{q} s_{i,k}s_{i,k+t}|:1\le i\le q-2, 1\le t\le q\} \cup \{|\sum_{k=0}^{q} s_{i,k}s_{j,k+t}|: 1\le i\neq j\le q-2, 0\le t\le q\}\}.$
\end{itemize}

\begin{table}[]\label{tab:2}
	\setlength{\abovecaptionskip}{0pt}%
	\setlength{\belowcaptionskip}{10pt}%
	\caption{PARAMETERS OF OUR SEQUENCES}
	\center
	\begin{tabular}{@{}|c|c|c|c|@{}}
		\toprule
		Field Size & Sequence Length & Family Size & Correlation \\ \midrule
		$3^3$    & 28             & 25       &    12                          \\ \midrule
		$3^4$    & 82              & 79        &       22                        \\ \midrule
		$3^5$   & 244            & 241        & 32                            \\ \midrule
		$5^2$ & 26             & 23        & 14                               \\ \midrule
		$5^3$   &126             & 123         &  26                             \\ \midrule
		23 & 24          & 21      &           12               \\ \bottomrule
	\end{tabular}
\end{table}

In order to better understand the above explicit construction, we provide an example for $q=5^2$.
\begin{ex}
{\rm For $q=5^2$, let $\zeta$ be a primitive element of $\F_{25}$ satisfying $\zeta^2+\zeta+2=0$. We choose a primitive quadratic irreducible polynomial $p(x)=x^2+(\zeta+2)x+\zeta+2$, i.e., $a=\zeta+2$ and $b=\zeta+2$.
From Theorem \ref{thm:3.6}, an automorphism $\sigma$ of $E=\F_q(u)$ with order $q+1$ can be determined explicitly by $\sigma(u)=-b/(u+a)$. Let $P_0$ be the zero of $u$ in $\F_q(u)$.
With the help of software Sage, all rational places $P_{\Gs^j}=\sigma^j(P_0)$ with $0\le j\le 25$ of $E$ and all representative elements $z_i$ with $1\le i\le 23$ in equivalence classes of $V\setminus \{0\}$ can be determined explicitly from Proposition \ref{prop:3.10} and Proposition \ref{prop:3.12}.
Hence, the family of binary sequences with a low correlation constructed in Theorem \ref{thm:3.17} can be obtained explicitly as follows:\\
{\footnotesize
$\bs_1=(-1, 1, -1, -1, 1, 1, -1, 1, 1, 1, 1, 1, 1, -1, 1, 1, -1, -1, 1, -1, -1, -1, 1, 1, -1, -1)$, \\
$\bs_2=(-1, 1, -1, 1, -1, -1, 1, 1, 1, 1, 1, -1, -1, -1, 1, -1, -1, -1, 1, 1, 1, 1, 1, -1, -1, 1)$, \\
$\bs_3=(1, 1, 1, 1, 1, -1, -1, -1, -1, 1, -1, -1, -1, -1, 1, -1, -1, -1, -1, 1, -1, -1, -1, -1, 1, 1)$, \\
$\bs_4=(-1, 1, -1, 1, 1, -1, 1, 1, -1, -1, 1, 1, 1, 1, 1, 1, 1, 1, 1, -1, -1, 1, 1, -1, 1, 1)$, \\
$\bs_5=(1, 1, 1, -1, 1, 1, -1, 1, -1, -1, 1, -1, -1, 1, 1, 1, -1, -1, 1, -1, -1, 1, -1, 1, 1, -1)$, \\
$\bs_6=(-1, 1, -1, -1, 1, 1, -1, 1, -1, -1, -1, -1, 1, -1, 1, -1, 1, -1, -1, -1, -1, 1, -1, 1, 1, -1)$, \\
$\bs_7=(-1, 1, 1, -1, -1, -1, -1, 1, 1, -1, -1, 1, -1, 1, -1, 1, 1, 1, 1, 1, 1, -1, 1, -1, 1, -1)$, \\
$\bs_8=(1, 1, -1, 1, -1, -1, 1, -1, -1, 1, -1, -1, 1, -1, 1, 1, 1, -1, -1, -1, 1, 1, -1, -1, -1, 1)$, \\
$\bs_9=(-1, 1, -1, 1, 1, -1, -1, -1, 1, 1, 1, 1, 1, 1, -1, -1, -1, 1, 1, -1, 1, -1, 1, 1, 1, 1)$, \\
$\bs_{10}=(1, 1, 1, -1, -1, 1, 1, 1, 1, -1, 1, -1, -1, -1, -1, 1, -1, 1, 1, 1, 1, -1, -1, 1, 1, 1)$, \\
$\bs_{11}=(-1, 1, -1, -1, -1, 1, -1, -1, 1, -1, -1, -1, 1, -1, -1, 1, 1, 1, 1, -1, -1, 1, 1, 1, 1, -1)$, \\
$\bs_{12}=(-1, 1, -1, -1, -1, 1, 1, -1, 1, 1, -1, -1, -1, 1, -1, 1, 1, -1, 1, 1, 1, 1, 1, -1, 1, 1)$, \\
$\bs_{13}=(-1, 1, 1, 1, 1, 1, -1, 1, 1, -1, -1, 1, 1, 1, 1, -1, -1, -1, 1, 1, 1, 1, -1, -1, 1, 1)$, \\
$\bs_{14}=(1, 1, -1, 1, 1, 1, 1, 1, -1, -1, -1, -1, 1, -1, -1, 1, -1, -1, 1, -1, -1, -1, -1, 1, 1, 1)$, \\
$\bs_{15}=(1, 1, 1, 1, -1, -1, 1, 1, 1, -1, -1, 1, -1, 1, 1, 1, 1, -1, 1, -1, -1, 1, 1, 1, -1, -1)$, \\
$\bs_{16}=(-1, 1, -1, 1, -1, 1, -1, 1, -1, -1, 1, 1, 1, -1, 1, 1, 1, 1, 1, 1, 1, -1, 1, 1, 1, -1)$, \\
$\bs_{17}=(-1, 1, 1, 1, -1, -1, 1, 1, 1, -1, 1, -1, -1, -1, -1, -1, 1, 1, 1, 1, -1, -1, -1, -1, -1, 1)$, \\
$\bs_{18}=(1, 1, -1, -1, -1, -1, -1, -1, -1, -1, -1, -1, 1, 1, 1, -1, 1, -1, -1, 1, 1, -1, -1, 1, -1, 1)$, \\
$\bs_{19}=(1, 1, -1, 1, -1, 1, 1, -1, 1, -1, 1, 1, -1, 1, -1, -1, -1, 1, -1, -1, 1, -1, -1, -1, 1, -1)$, \\
$\bs_{20}=(1, 1, -1, -1, 1, 1, -1, 1, 1, -1, -1, 1, 1, -1, -1, -1, 1, 1, -1, 1, -1, 1, 1, -1, -1, -1)$, \\
$\bs_{21}=(-1, 1, -1, -1, 1, -1, -1, 1, -1, 1, -1, -1, -1, 1, 1, -1, -1, 1, 1, -1, -1, -1, 1, -1, 1, -1)$, \\
$\bs_{22}=(1, 1, 1, -1, -1, 1, 1, 1, 1, 1, 1, 1, -1, -1, 1, -1, -1, -1, -1, 1, -1, -1, 1, 1, 1, 1)$, \\
$\bs_{23}=(1, 1, 1, -1, -1, -1, -1, -1, 1, 1, 1, -1, 1, 1, -1, 1, -1, 1, 1, 1, -1, 1, -1, 1, 1, -1)$.} \\
It is easy to verify that the correlation of this family of sequences is $14=4+2\sqrt{25}$.}
\end{ex}

We list more numerical results on binary sequences with a low correlation obtained from the above explicit construction in the above Table II.

\end{document}